\newtheorem{defi}{Definition}[section]
\newtheorem{theorem}{Theorem}[section]
\newtheorem{remark}{Remark}[section]
\newenvironment{proof}[1][]
 {\if\relax\detokenize{#1}\relax
    %
  \else
  \fi
  \proofinner}
 {\endproofinner}
\newcommand{\dv}[1]{{\color{black}{#1}}}
\newcommand{\nw}[1]{{\color{black}{#1}}}
\begin{document}
%

\title{Fairness and Cost Constrained Privacy-Aware Record Linkage}

%
%
%

\author{Nan Wu,
        Dinusha Vatsalan, 
        Sunny Verma,
        and~Mohamed Ali Kaafar
\thanks{Nan Wu is with the Faculty of Science and Engineering, Macquarie University, and CSIRO's Data61, Sydney, Australia. Email: nan.wu5@hdr.mq.edu.au}%
\thanks{Dinusha Vatsalan is 
with the Faculty of Science and Engineering, Macquarie University, Sydney, Australia.
Email: dinusha.vatsalan@mq.edu.au
}
\thanks{Sunny Verma is with the Faculty of Science and Engineering, Macquarie University, Sydney, Australia. E-mail: sunny.verma@mq.edu.au}
\thanks{Mohamed Ali Kaafar is with the Faculty of Science and Engineering, Macquarie University, Sydney, Australia. Email: dali.kaafar@mq.edu.au}}


%
%

\markboth{IEEE Transactions on Information Forensics and Security, February~2022}%
{Shell \MakeLowercase{\textit{et al.}}: Bare Demo of IEEEtran.cls for IEEE Journals}
%



\maketitle



%
\IEEEpeerreviewmaketitle

\begin{abstract}
Record linkage algorithms match and link records from different databases that refer to the same real-world entity based on direct and/or quasi-identifiers, such as name, address, age, and gender, available in the records. Since these identifiers generally contain personal identifiable information (PII) about the entities, record linkage algorithms need to be developed with privacy constraints. Known as privacy-preserving record linkage (PPRL), many research studies have been conducted to perform the linkage on encoded and/or encrypted identifiers. Differential privacy (DP) combined with computationally efficient encoding methods, e.g. Bloom filter encoding, has been used to develop PPRL with provable privacy guarantees. The standard DP notion does not however address other constraints, among which the most important ones are fairness-bias and cost of linkage in terms of number of record pairs to be compared. 
In this work, we propose new notions of fairness-constrained DP and fairness and cost-constrained DP for PPRL and develop a framework for PPRL with these new notions of DP combined with Bloom filter encoding. We provide theoretical proofs for the new DP notions for fairness and cost-constrained PPRL and experimentally evaluate them on two datasets containing person-specific data. Our experimental results show that with these new notions of DP, PPRL with better performance (compared to the standard DP notion for PPRL) can be achieved with regard to privacy, cost and fairness constraints.
 
\end{abstract}

\begin{IEEEkeywords}
Differential privacy, Bloom filter encoding, record linkage, data matching, fairness, cost
\end{IEEEkeywords}

\section{Introduction}
\label{sec:intro}

\IEEEPARstart{R}{ecord} 
linkage has become an essential component in any cross-organizational or cross-domain data analytics applications. 
Example applications range from healthcare applications, such as health research or personalized healthcare, business applications, such as targeted marketing and recommendation, and government services to national security applications, such as crime and fraud detection.

Due to the absence of unique entity identifiers in different databases held by different organizations (parties), linking data from different databases that correspond to the same individual needs to be conducted using the commonly available person-specific identifiers (e.g. name, address, age, and gender). However, such person-specific identifiers contain personal identifiable information (PII) about the entities, and therefore can be used to re-identify and infer information about the entities when shared across organizations. 
Linking data with privacy constraints received much attention in the literature over the last two decades. A large body of work has been done to develop privacy-preserving record linkage (PPRL) techniques, using a variety of privacy-enhancing or privacy-preserving technologies, such as cryptographic techniques and/or probabilistic techniques including \nw{Bloom Filter (BF)} encoding combined with \nw{differential} privacy (DP)~\cite{Ala12,Sch16,Xue20}.
Probabilistic encoding techniques, such as Bloom filter encoding, are computationally efficient for fuzzy linking of large-scale data, and are therefore highly suitable for Big Data applications~\cite{Vat17b}. 

Linkage is generally a classification problem that aims to classify pairs of records into the classes of `matches' and `non-matches'. Since the number of record pairs that need to be compared for the classification task becomes quadratic in the size of the databases, the records are first binned into blocks such that highly similar records are grouped together, and then records are compared with only the records in the same bin, reducing the computational complexity from quadratic to sub-quadratic. The bins of encoded records are sent to the server (third-party/linkage unit) for conducting the linkage using a classifier.

The frequency distribution of encoded records in the bins could reveal information about the bins by conducting a frequency inference attack~\cite{Vat14}. This has been addressed by several works in the literature, ranging from non-provable privacy guarantees, such as $k$-anonymous grouping~\cite{Vat13b,Vat13c,Kar12b,Ran15}, to provable privacy guarantees, such as differential privacy~\cite{Cao15,Ina08,Ina10,Kuz13,He17}. The standard DP notion for PPRL incurs computational cost in terms of additional record pair comparisons and it does not consider bias in the data, especially the fairness-bias. In this work, we consider record linkage with not only privacy constraints, but also with cost and 
fairness constraints
for practical PPRL applications.

\begin{defi}[Fairness constraint]
Fairness of linkage with regard to a certain protected sensitive feature that has $G$ sensitive groups (for example gender with $G=2$ groups, $g_1=$`male' and $g_2=$`female') determines how much the linkage classifier $f(\cdot)$ distorts from producing linkage decisions with equal probability for individuals across different protected groups, for example equal true match rates (true positive rates/TPRs) for female and male groups, i.e. $TPR_{g_1} = TPR_{g_2}$.
\end{defi}

\begin{defi}[Cost constraint]
\nw{Assuming encoded records from two databases $\mathcal{D_A}$ and $\mathcal{D_B}$ are grouped into bins using a blocking protocol $\mathcal{B}$ and blocking strategy $\mathcal{B}^\mathcal{S}$. The set of records in $\mathcal{D_A}$ been blocked into $i^{th}$ bin is represented as $\mathcal{B}_i(D_A)$.} \nw{Blocking strategy $\mathcal{B}^\mathcal{S}$ specifies the records in $\mathcal{B}_i(D_A)$ are compared with the records in $\mathcal{B}_j(D_B)$ for each pair $(i,j)\in \mathcal{B}^\mathcal{S}$.}
The computational cost of linkage is $\sum_{(i,j)\in \mathcal{B}^\mathcal{S}}\mathbb{E}(|\mathcal{B}_i(D_A)||\mathcal{B}_j(D_B)|)$.
With perturbed bins that include $r$ dummy or noisy records (where $r$ is calculated depending on the privacy budget $\epsilon$ for DP) in addition to the original encoded records from $\mathcal{D_A}$ or $\mathcal{D_B}$, the computational cost increases to
$\sum_{(i,j)\in \mathcal{B}^\mathcal{S}}\mathbb{E}((|\mathcal{B}_i(D_A)|+r_i)(|\mathcal{B}_j(D_B)|+r_j))$.
\end{defi}

Developing classifiers that are fair with respect to a protected/sensitive feature~\cite{Zaf15}, such as gender or race, is an important problem for machine learning applications in general and specifically for PPRL. \nw{This is to avoid} significant bias \nw{been introduced} towards certain groups of individuals, for example against black people in fraud and crime detection systems~\cite{Flo16,Lar16}
or online recommendation systems~\cite{Swe13}, and against women in job recommendation systems~\cite{Dat15}.

Fairness-bias in data imposes different levels of challenges on classifying record pairs into matches and non-matches for record pairs belonging to different groups. For example, let's assume that several identifiers (e.g. last name and address) exhibit more variance in female records in different databases than in their male counterparts due to marriage and/or separation, which causes record pairs belonging to the female cohort to be more difficult to classify as `matches' than male cohort. In addition, supervised machine learning classifiers can learn to ignore poor performance on a small (minority) group if it can exploit knowledge about the majority population, potentially leading to unfair outcomes.
Without careful treatment a classifier may inadvertently be biased towards the cohort that is easier to classify.

Achieving fair linkage across different groups is a difficult yet important challenge. 
Addressing privacy constraints in addition to fairness constraints for fair PPRL introduces additional challenges in terms of balancing all three key factors, which are privacy, fairness, and cost. \nw{Notably, privacy, fairness, and cost are not independent from each other. Existing works have discussed the trade-offs between privacy and communication and computational cost~\cite{He17}, and between fairness and privacy~\cite{Vat13c}. With fairness constraints considered, the trade-off between privacy, fairness, and cost needs to be addressed and balanced.} 

In this paper, we study how to address fairness and cost constraints in PPRL using fairness and cost-constrained differential privacy (DP) algorithms. We first define two new notions of DP constrained on fairness only and constrained on cost and fairness for PPRL. We propose a PPRL framework that satisfies fairness and cost constrained DP. We provide formal proofs for the two new notions of DP for PPRL and empirically study and analyse the different constraints for the PPRL problem using the proposed framework. We conducted experiments on two person-specific datasets that validate the fairness-bias in the original PPRL algorithm with the standard DP notion and the improvement in the fairness and computational cost using our proposed framework with the two new notions of DP. To the best of our knowledge, this is the first work that addresses fairness and cost constraints in DP for PPRL.  \\


\textbf{Contributions:}
\begin{enumerate}
    \item We first demonstrate that current (standard) DP notion for PPRL is unfair and biased towards minority groups of individuals
    \item 
    We then propose two new notions of DP with constraints of privacy, fairness and cost. Specifically, we formalize the two notions of fairness-constrained DP and cost-constrained fairness-aware DP for PPRL problem. 
    \item We introduce a framework enabling PPRL with fairness and cost constrained DP. Specifically, we introduce two methods that add noise to the blocks of (Bloom filter) encoded records 1) adhering to the fairness constrained DP and 2) fairness and cost constrained DP.
    \item We provide formal proofs for the two new notions of DP guarantees for PPRL and show that the two PPRL mechanisms that follow these two notions of DP provide $\epsilon$-DP guarantees constrained to cost and fairness of linkage.    
    \item \dv{We conduct experiments on two datasets, real and synthetic North Carolina Voter Registration datasets and synthetic Australian Bureau of Statistics datasets}, and evaluate the record linkage performance in terms of linkage accuracy, fairness metrics, computational cost and privacy budget and show that our proposed methods outperform the existing and baseline methods in terms of fairness and cost. 
    
\end{enumerate}

\textbf{Outline:} The rest of the paper is organized as follows: We review related work in PPRL and fairness in the following section. We then provide preliminaries of the PPRL problem and demonstrate the fairness issues in PPRL through experimental results to motivate the problem in Section~\ref{sec:prel}. Next, we formalize feature-level DP for PPRL in Section~\ref{sec:DP_PPRL} and new notions of fairness and cost constrained DP for PPRL in Section~\ref{sec:new_dp_notion}. In Section~\ref{sec:framework}, we present our framework for PPRL based on Bloom filter encoding combined with \nw{differential} private blocking methods following the new DP definitions. We present and discuss the experimental results of our algorithms in Section~\ref{sec:experiments}. Finally, we provide the takeaway messages from this work and discuss some open questions and future research directions in Section~\ref{sec:conclusion}.

\section{Related work}
\label{sec:related_work}

A long line of research has been conducted in privacy-preserving record linkage (PPRL)~\cite{Vat13} and the sub-problem of PPRL, which is private blocking to reduce the computation complexity of PPRL~\cite{Vat17b}. Only limited work has provided formal differential privacy (DP) guarantees~\cite{Kuz13,Ina08,Ina10,He17}.

For Bloom filter-based encoding used in PPRL, few studies have been conducted to provide DP guarantees to the Bloom filter encodings.
Blip is a method that flips each bit in the Bloom filter with the probability of $p = \frac{1}{1+e^{\epsilon/k}}$ to achieve $\epsilon$-differential privacy~\cite{Ala12}, 
where $k$ is the maximum number of tokens hash-mapped into the Bloom filters. 

Another method uses a flipping probability $p$ to flip the bits in the Bloom filter to meet $\epsilon$-differential privacy guarantees~\cite{Sch16}. In contrast to Blip, this method uses a parameter $p$ to control the bits to be flipped (i.e. privacy-utility trade-off) depending on the privacy budget $\epsilon$. The perturbed bit value $b_i'$ of the bit $b_i$ in a Bloom filter $b$ is (\nw{with $1 \le i \le n_l$}) $b_i' = 1$ with $p/2$ probability, $b_i'=0$ with $p/2$ probability, and $b_i'=b_i$ with $1-p$ probability. This gives $\epsilon = 2 \times k \times ln(2) \times \frac{1 - p/2}{p/2}$, \nw{where $k$ is the number of hash functions in Bloom filter}.

A recent work proposed to use $p$-value to generate a thresholded Laplace distribution in order to calculate the number of 1s in the noise vector (i.e. number of flips as 1-bits in the noise vector denote that the corresponding bits in the Bloom filter need to be flipped in the Bloom filter)~\cite{Xue20}.

A common solution to address the computation complexity of PPRL (and record linkage in general) is blocking where the records are pre-assigned into similarity groups/bins and then the comparisons are limited to only those records that are within the same bins. Bins of records can be susceptible to frequency inference attacks where the frequency distribution of blocks are compared with a known frequency distribution of external values (for example, zipcodes, if blocking is performed using zipcodes). Most existing private blocking techniques have addressed the privacy leakages using non-provable techniques, such as $k$-anonymity~\cite{Vat13b,Vat13c,Kar12b,Ran15}, pruning rare bins~\cite{Kuz13}, or using locality sensitive hashing~\cite{Kara14}. Only Few studies have addressed private blocking using differential privacy guarantees~\cite{Cao15,Ina08,Ina10,Kuz13,He17}.

\nw{Differential privacy is used to add noise into the blocks generated using hierarchical clustering~\cite{Kuz13}.} However, a recent study in~\cite{Cao15} shows that even with DP guarantees, these private blocking techniques can reveal some private information by learning from the final output of PPRL.
~\cite{He17} proposed end-to-end DP guarantees for PPRL by introducing output-constrained DP notion. In this work, DP noise is added to bins of encoded records such that the disclosure of true matching records is insensitive to the presence or absence of a single non-matching (noisy/dummy) record.
However, no work has so far studied DP constrained to fairness and/or cost. Moreover, fairness in record linkage is also an immature research topic with only one recent work in fairness-aware PPRL~\cite{Vat20a}.

There have been several algorithms and techniques proposed in the machine learning literature to improve fairness or mitigate bias in classification problems~\cite{Meh21}. These are broadly categorized into: pre-processing, in-processing (i.e. at training time), and post-processing.
The aim of pre-processing is to learn a new representation of data $X$ such that it removes the information correlated to the sensitive attribute and preserves the information of $X$ as much as possible~\cite{Zem2013,Fel15,Kra18}. The classifier can thus use the new data representation and produce results that preserve fairness. Any classifier can be supported and no re-training is required with this category of methods.

In-processing techniques add a constraint or a regularization term to the objective functions of classifiers~\cite{Aga18,Goe18,Hua19,Cel19}. Post-processing methods attempt to modify a learned classifier in a way that satisfies fairness constraints~\cite{Ple17,Woo17,Dwor18}. 
In this work, we use pre-processing techniques to add fairness and/or cost-constrained DP noise to input data (grouped into bins) to achieve fair and cost effective PPRL.

\begin{figure*}[!t]%
    \centering
     
    \includegraphics[width= 0.34\linewidth]{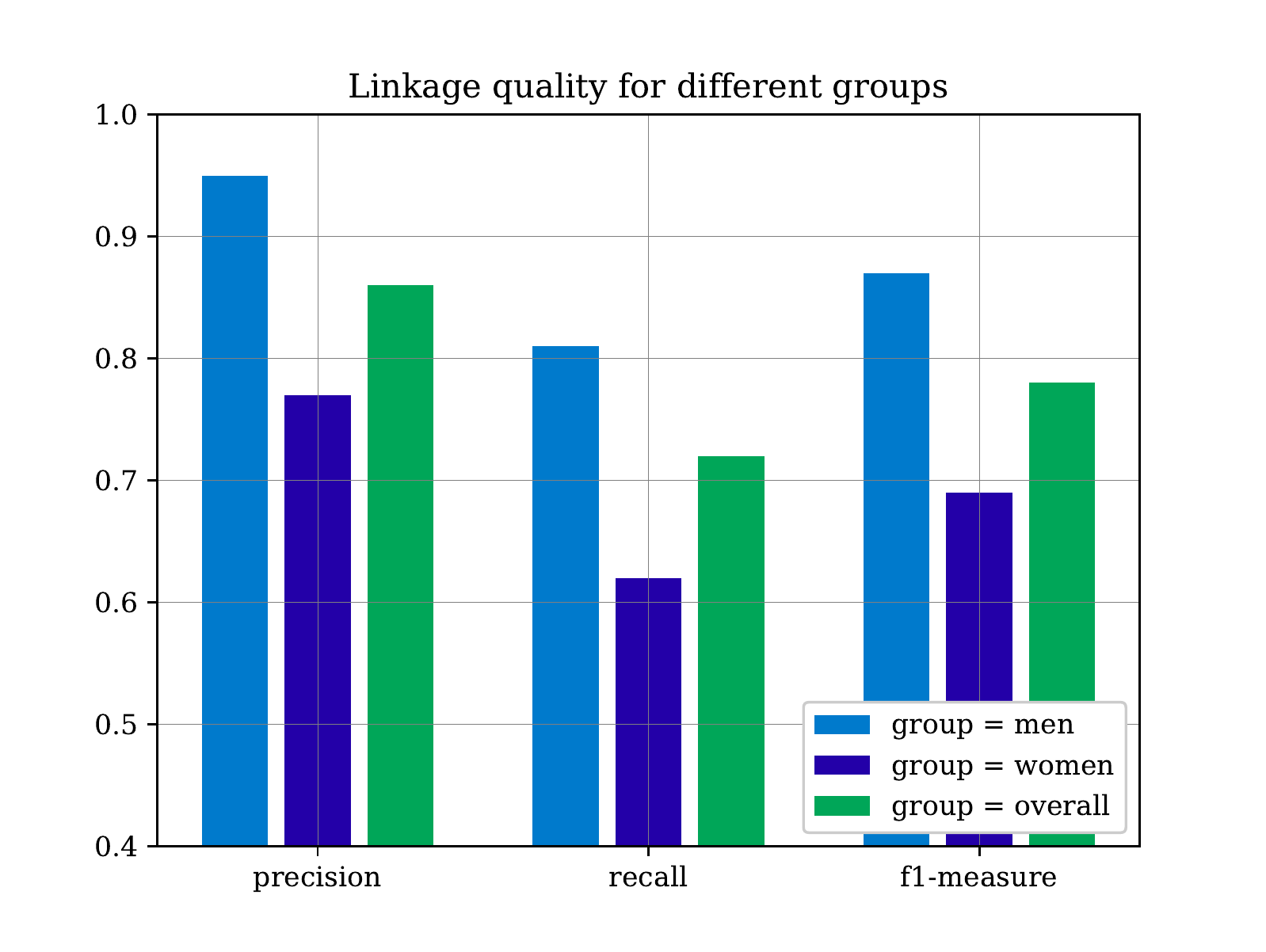}
     \includegraphics[width= 0.65\linewidth, height= 4.3cm]{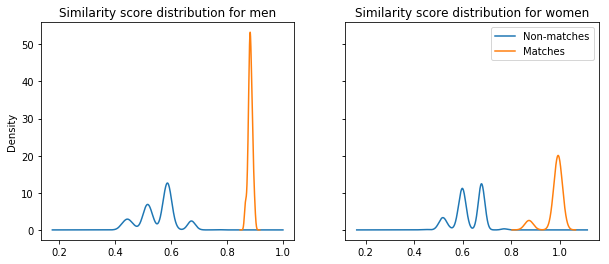}

    \caption{\dv{Similarity score distribution of true matches and true non-matches in men and women groups (right) and comparison of linkage quality (in terms of precision and recall) for men and women groups and overall on Australian Bureau of Statistics (ABS) dataset (as used in our experiments in Section~\ref{sec:experiments}) using the standard differential private blocking and logistic regression-based PPRL.} 
    }
    \label{fig:comp_acc_1}
\end{figure*}

\section{Problem Motivation}
\label{sec:prel}

We first define the PPRL problem and the general differential privacy notion for PPRL problem. We then discuss the limitations of the existing differentially private algorithms for PPRL in terms of fairness constraints using an experimental study on North Carolina Voter Registration (NCVR) dataset.

\begin{defi}[PPRL]
Assuming $p$ database owners (or parties) $P_1$, $P_2$, $\cdots$, $P_p$ with their respective databases $\mathcal{D}_1$, $\mathcal{D}_2$, $\cdots$, $\mathcal{D}_p$
(containing sensitive or confidential person-specific data), PPRL links these databases to identify \nw{whether the $x^{th}$ record in $i^{th}$ dataset $r_{i,x} \in \mathcal{D}_i$ match with the $y^{th}$ record in $j^{th}$ dataset $r_{j,y} \in \mathcal{D}_j$}, i.e. refer to the same real-world entity, where $1 \le i,j \le p$ and $i \neq j$. PPRL applies a classification function $f(\cdot)$ on the encoded records from parties that takes as input the similarity scores or distances between encoded quasi-identifying (QID) attributes of records, i.e. $sim(r_{i,x}, r_{j,y})$, where $sim(\cdot)$ is a similarity function that returns the overall similarity between two records $r_{i,x}$ and $r_{j,y}$.
\end{defi}

Without loss of generality, we assume $p=2$ in the rest of this paper and denote the two databases as $\mathcal{D_A}$ and $\mathcal{D_B}$. We assume a semi-honest (honest-but-curious) Linkage Unit ($LU$) is available to conduct the linkage on encoded records sent by the parties, which is a commonly used linkage model in many real applications~\cite{Ran13}. 
We also assume a set of QID attributes $A$ (e.g. name, address, and date of birth), which will be used for the linkage, is common to all these databases. Without loss of generality, we assume $p=2$ parties or databases in this paper and name the two parties as `Alice' and `Bob'.

Fairness of the PPRL classifier measures the classification model's behavior towards different individuals grouped by a particular protected or sensitive feature~\cite{Bin18}. The protected feature could either be part of the QIDs used to link records or not. For example, let's assume ``gender" is a protected feature dividing a dataset into two groups: male and female. Fairness of a PPRL classifier on this dataset would define whether the model treats both the male and female user groups equally in terms of correct predictions of record pairs belonging to the different groups as `matches' without giving benefit to one group more than the other. 

PPRL can result in biased predictions for different groups based on the protected feature. For example, with gender as the protected feature, female record pairs might have poor accuracy of linkage compared to male record pairs due to different levels of challenges involved in the linkage. The female group of individuals might have more likelihood of changing their last name or address than the male group due to marriage and/or separation. Additionally, if the classifier is trained on a protected feature-imbalanced dataset, then the predictions could be biased towards the minority group. These challenges impose fairness-bias in PPRL classifiers.

Fig.~\ref{fig:comp_acc_1} illustrates the fairness-bias on synthetic Australian Bureau of Statistics (ABS) dataset used for linkage experiments~\footnote{Available from \url{https://github.com/cleanzr/dblink-experiments/tree/master/data}} using the standard differential privacy-based private blocking and logistic regression PPRL classification~\cite{Kuz13,He17}. As can be seen, when data is biased towards a certain group (women/female in our experiments) in terms of errors and variations in the features used for the linkage as well as small size of the group in the training data, then the standard DP notion for PPRL exhibits fairness-bias towards the minority group (women group in this example). The similarity scores of matches and non-matches for the women group are highly overlapping than the men group (right plot in Fig.~\ref{fig:comp_acc_1}), making the linkage more challenging for the women group. Hence, the linkage quality measured in terms of precision and recall 
is considerably lower for the women group than men group (left plot in Fig.~\ref{fig:comp_acc_1}). 

\dv{There are many different fairness definitions proposed in the literature. The three commonly used definitions are Demographic Parity, Equalized Opportunity, and Equalized Odds. As discussed in~\cite{Vat20a}, Equalized Odds is the best fit fairness definition for PPRL. Since Demographic Parity requires similar rates of classification of record pairs as ‘matches’ for different groups regardless of the ground truth, it can result in linkage accuracy loss. Moreover, unlike other classification tasks, PPRL is a class-imbalanced problem with significantly lower number true matches than true ‘non-matches’, which can lead to many false positives with the Demographic Parity criteria. Equalized Opportunity only considers the true positive rate, whereas Equalized Odds considers the errors (false negatives and false positives). In PPRL we are particularly concerned about linkage errors, and therefore use Equalized Odds as the fairness criteria in our study.}

Moreover, the number of similarity comparisons required for the PPRL function increases quadratically with increasing size of datasets, and therefore blocking has been used to reduce the comparison space.
Blocking aims at reducing the comparison space for linkage by eliminating the comparisons between pair of records that are highly unlikely to be matches~\cite{Chr11,Vat17b}. 
The main aim of these techniques is to group records into disjoint or overlapping bins such that only records within the same bin need to be compared with each other. 
Differential privacy algorithms for private blocking have been developed to prevent information leakage from bins by adding dummy or noisy (encoded) records into the bins at the cost more record pair comparisons~\cite{Ina08,Ina10,Kuz13,Cao15,He17}. \dv{However, these  methods do not consider fairness-bias in the data, and thus dummy records could amplify the bias towards minority group.}

\section{Feature-level Differential Privacy for PPRL}
\label{sec:DP_PPRL}

Frequency inference attacks on encoded bins (generated by a private blocking method) can reveal information about the records in bins, for example bins with fewer records (such as bins with rare/uncommon last names, if the blocking strategy is to bin records based on last name) can be re-identified. Differential privacy noise addition has been used in the literature to make the blocks resilient against frequency attacks~\cite{He17,Ina08,Ina10,Kuz13}. These existing methods add noise, that is constrained on the standard DP guarantees, to the bins of records as dummy records.

\begin{defi}[Differentially private blocking for PPRL:]
Alice and Bob agree on a blocking function $\mathcal{B}$ with $k$ bins and strategy $\mathcal{B}^S$. A specific number of dummy records are inserted into each bin of the blocking strategy such that the bin sizes are differentially private. Each dummy record does not match \textbf{any} record. 
\end{defi}

\cite{He17} defines PPRL neighbours for record-level data. In this paper, they propose a weaker ($\epsilon, \delta$)-DP, but 
an end-to-end privacy definition for the two party setting. In their work, it is able to reveal records that are classified as `matches' and to reveal statistics about non-matching records while not revealing the presence or absence of individual non-matching records in the dataset. However, similar to other works~\cite{Ina08,Ina10,Kuz13}, their DP definition is only constrained on privacy guarantees, and does not take into account fairness and computational cost constraints.

In order to add DP noise that is constrained on fairness, we split the data into the $G$ groups of the protected feature value. We define protected feature-level DP for PPRL, where the bins of records within each protected feature group are guaranteed to be differentially private. We note that the privacy guarantees are provided for the entire record within each protected group, not only for the protected feature.
For example, if the protected feature is `gender' and it has only $G=2$ groups, which are `male' and `female', then the data is split into two disjoint groups.
So, the adjacent datasets become two male or female groups that are different by one male or female record in each group, respectively.
\dv{Please note that our proposed new DP notions are applicable to multiple protected features as well. With multiple protected features, for example gender with `female' and `male' groups and age group with `young' and `old' groups, the number of protected feature groups become $G=4$ (i.e. `young female', `old female', `young male', and `old male'. Without loss of generality, we assume a single protected feature in defining our new DP notions. }
The neighbours in each of the disjoint groups is defined based on feature-level PPRL neighbours as:

\begin{theorem}[Feature-level-PPRL neighbors]
Given function $f:\mathcal{D_A}\times\mathcal{D_B}\to \mathcal{O}$ and $D_A\in\mathcal{D_A}$, for any $1\leq g\leq G$, and any $D_B, D_{B'}\in\mathcal{D_B}$ differ in one pair of non-matching records from protected feature group $g$, $D_{B'_g}=D_{B_g}-b_g+b'_g$, $b_g\neq b'_g\neq 0$. $D_{B_g}$ and $D_{B_g'}$ are neighbors w.r.t to \nw{$f(D_{A_g},\cdot)$} for protected feature group $g$, denoted by $\mathcal{N}(f(D_{A_g},\cdot))$ if
\begin{itemize}
    \item $f(D_{A_g},D_{B_g})=f(D_{A_g},D_{B_g}')$,
    \item $D_{B_g}\setminus D_{B_g}'\cup D_{B_g}'\setminus D_{B_g}\neq \emptyset$.
\end{itemize}

\end{theorem}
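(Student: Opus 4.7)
The plan is to verify that the construction described in the hypothesis satisfies the two bullet conditions in the definition of neighbors, thereby establishing that $D_{B_g}$ and $D_{B_g'}$ qualify as feature-level-PPRL neighbors with respect to $f(D_{A_g},\cdot)$. The overall approach mirrors the PPRL-neighbors argument of He et al.~\cite{He17}, but restricted to the sub-datasets indexed by the protected feature group $g$. Because the paper has already assumed that the protected feature partitions $\mathcal{D_A}$ and $\mathcal{D_B}$ into $G$ disjoint groups and that $f$ outputs the set of matching pairs, the verification reduces to tracking how a single-record substitution within one group affects that output.

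First I would establish the first bullet, namely $f(D_{A_g},D_{B_g})=f(D_{A_g},D_{B_g'})$. The idea is that $b_g$ and $b_g'$ are non-matching records with respect to $D_{A_g}$, so neither contributes a matching pair to the output of $f(D_{A_g},\cdot)$. Removing $b_g$ from $D_{B_g}$ cannot eliminate any pair from the matching output (since $b_g$ does not participate in any match), and inserting $b_g'$ cannot add any new matching pair (for the same reason). All other records in $D_{B_g}$ are unchanged, so every matching pair produced on $D_{B_g}$ is also produced on $D_{B_g'}$ and vice versa. Hence $f(D_{A_g},D_{B_g})=f(D_{A_g},D_{B_g'})$. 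A brief remark should justify why the matching behaviour within group $g$ is independent of records from other protected groups: this follows because $f$ acts pair-wise through the similarity function $\mathrm{sim}(\cdot,\cdot)$ on the QID attributes, and matches across different protected feature groups are not considered in the per-group restriction $f(D_{A_g},\cdot)$.

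Second I would verify the symmetric difference condition $D_{B_g}\setminus D_{B_g'}\cup D_{B_g'}\setminus D_{B_g}\neq\emptyset$. By the substitution $D_{B_g'}=D_{B_g}-b_g+b_g'$ with $b_g\neq b_g'$ and both non-zero, we have $b_g\in D_{B_g}\setminus D_{B_g'}$ (and symmetrically $b_g'\in D_{B_g'}\setminus D_{B_g}$), so the symmetric difference contains at least one element. Combining the two bullets, the pair $(D_{B_g},D_{B_g'})$ lies in $\mathcal{N}(f(D_{A_g},\cdot))$.

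The subtle step, and the main obstacle I expect, is the first bullet: one must argue carefully that the non-matching property of $b_g$ is preserved under the per-group restriction of $f$. In particular, $b_g$ might coincidentally be similar to some record in $D_{A_{g'}}$ for $g'\neq g$, but the per-group restriction $f(D_{A_g},\cdot)$ only considers candidate matches within protected group $g$, so such cross-group similarities do not affect the output. I would state this explicitly as a short lemma or remark before invoking it, so that the non-matching assumption on $b_g,b_g'$ can be cleanly transported into the statement that the restricted output is invariant under the substitution. Once this is done, the two bullets follow immediately and the theorem is established.
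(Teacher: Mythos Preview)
Your proposal is mathematically sound for the direction you chose, but it runs \emph{opposite} to the direction the paper actually proves. You read the statement as: ``given the substitution $D_{B_g'}=D_{B_g}-b_g+b_g'$ with $b_g,b_g'$ non-matching, verify the two bullet conditions,'' and you do this correctly. The paper, however, treats the two bullets as the \emph{definition} of neighbors and then argues the converse: starting from $f(D_{A_g},D_{B_g})=f(D_{A_g},D_{B_g}')$, it observes that if $D_B$ and $D_{B'}$ differed in a matching record the outputs would differ, so the difference must lie in non-matching records; and to keep $\|D_{B_g}\|=\|D_{B_g'}\|$ the number of non-matching records added equals the number removed, yielding exactly a single swapped pair $(b_g,b_g')$. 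In other words, the paper justifies why the substitution form is the \emph{necessary} shape of a neighboring pair, whereas you show it is \emph{sufficient}.

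Both directions are easy and together give an equivalence, so neither argument is wrong; but your careful lemma about cross-group similarities and the per-group restriction of $f$ is doing work the paper never needs, because the paper's direction only uses the contrapositive ``changing a matching record changes the output.'' If you want to align with the paper, drop the verification of the bullets and instead start from the bullets to derive the non-matching-swap characterization.
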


\begin{proof}
If ${D_B}$ and ${D_{B'}}$ differs in a matching record, then their matching outputs with a given ${D_A}$ are different. Hence, ${D_B}$ and ${D_{B'}}$ differ in one or more non-matching records. 
Also, for any $1\leq g\leq G$, to ensure $\|D_{B_g}\|=\|D_{B'_g}\|$, the number of non-matching records added to $D_{B_g}$ to get $D_{B'_g}$ is the same as the number of non-matching records deleted from $D_{B_g}$.
Then, a neighbouring pair of $D_{B_g}$ and $D_{B'_g}$ regarding to one protected feature $g$ is differed by only one pair of non-matching pair $(b_g, b'_g)$. 

\end{proof}


\begin{defi}[Feature-level DP]
A 2-party PPRL protocol for computing function $f:\mathcal{D_A}\times\mathcal{D_B}\to \mathcal{O}$ is feature-level $(\epsilon_g,\sigma_g,f)$-differential privacy (DP) for any feature $g\in [1,\cdots,G]$ if for any $(D_{B_g},D_{B_g}')\in \mathcal{N}(f(D_{A_g},\cdot))$, the views of Alice during the execution satisfies

{\footnotesize\begin{align}
    &Pr[VIEW_A(m(D_{A_g},D_{B_g})=1)]\\\nonumber
    \leq &e^\epsilon_g  Pr[VIEW_A(m(D_{A_g},D_{B_g'})=1)] +\sigma_g, \forall g\in [1,\cdots,G]
\end{align}
}
where $m: x\times y\to \{0,1\}$ is a matching rule.
And the same holds for the views of Bob. 
\end{defi}

\begin{theorem}
The expectation of number of dummy records added regarding to each group $g$ in each bin $b$ is:

{\footnotesize\begin{align}
    \mathbb{E}(\|R_{g,b}\|) = \frac{\sigma}{2}
\end{align}}
\end{theorem}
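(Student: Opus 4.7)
The plan is to start from the feature-level $(\epsilon_g,\sigma_g)$-DP definition and make explicit the dummy-record insertion mechanism that realises it in the framework, since the statement to be proved is purely about the expected output size of that mechanism. Because $\sigma_g$ plays the role of the additive slack term (analogous to $\delta$ in $(\epsilon,\delta)$-DP), the natural construction pairs a Bernoulli ``spawn'' decision controlled by $\sigma_g$ with a fair coin flip over which of the two candidate non-matching records $b_g, b'_g$ from the Feature-level-PPRL-neighbors characterisation the generated dummy should equal. This pairing is exactly what ties the dummy count to $\sigma$ while preserving the symmetry between $b_g$ and $b'_g$ demanded by the neighbour definition, and it will be where the factor of $\tfrac{1}{2}$ ultimately comes from.

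Given that mechanism, the expectation calculation itself is straightforward. First I would apply linearity of expectation across the independent noise trials performed for each bin-group pair $(g,b)$. Then, conditioning on the spawn decision, I would note that with probability $\sigma$ a candidate dummy is produced and, conditioned on production, the fair coin causes that dummy to be physically inserted into bin $b$ (rather than the ``twin'' bin corresponding to $b'_g$) with probability $\tfrac{1}{2}$; composing the two gives $\mathbb{E}[\|R_{g,b}\|] = \sigma \cdot \tfrac{1}{2} = \sigma/2$. Should the framework actually draw noise from a truncated Laplace or truncated geometric distribution whose non-zero tail mass is calibrated to equal $\sigma$, the same value falls out from standard identities for those distributions, and the argument remains structurally identical.

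The main obstacle I anticipate is pinning down the precise noise distribution used in the framework section, since the factor of $\tfrac{1}{2}$ in $\sigma/2$ is sensitive to this choice and emerges cleanly only after the symmetry between the two members of each neighbour pair has been exploited. The separate claim that the chosen mechanism actually satisfies feature-level $(\epsilon_g,\sigma_g)$-DP is a prerequisite that I would invoke from the framework section rather than re-derive here, since it is logically independent of the expectation computation and would otherwise lengthen the proof considerably.
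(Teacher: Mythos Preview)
Your proposal rests on a misreading of what $\sigma$ denotes here. In the paper, $\sigma$ is \emph{not} the additive slack of an $(\epsilon,\delta)$-style definition; in the line immediately preceding the computation the paper fixes $\sigma = \Delta B/\epsilon$, i.e.\ $\sigma$ is the \emph{scale parameter of the Laplace noise} added to each bin count. Consequently the dummy-record mechanism is not a Bernoulli ``spawn'' with probability $\sigma$ followed by a fair coin (that construction would only make sense for $\sigma\in[0,1]$ and has no connection to the Laplace mechanism actually used). Your symmetry argument about twin bins and neighbour pairs $(b_g,b_g')$ is therefore aimed at the wrong object and would not recover the stated result.

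The paper's argument is a direct integral: draw $X\sim\mathrm{Lap}(0,\sigma)$ with density $f(x)=\frac{1}{2\sigma}e^{-|x|/\sigma}$, truncate negative draws to zero (the algorithm sets $n\_dummy\_g\gets 0$ when the sampled value is negative, since records are never deleted), and compute
\[
\mathbb{E}(\|R_{g,b}\|)=\int_0^{\infty}\frac{x}{2\sigma}e^{-x/\sigma}\,dx=\frac{\sigma}{2},
\]
via the substitution $u=-x/\sigma$ and integration by parts. The factor $\tfrac{1}{2}$ comes from the Laplace density's normalising constant (half the mass lies on each side of zero), not from any coin flip between neighbours. To fix your write-up you should discard the Bernoulli/coin-flip construction entirely and carry out this one-line expectation for the positive part of a Laplace variable.
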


\begin{proof}
The pdf of Laplace DP noise is     $\forall_{x\in\mathbb{R}}f(x)=\frac{1}{2\sigma} e^{-\frac{|x|}{\sigma}}$, $\sigma=\frac{\Delta B}{\epsilon}$, where $\Delta B$ is the sensitivity of the counting query to PPRL. 

{\footnotesize\begin{align*}
    \mathbb{E}(f(x)\cdot x) &= \int_{-\infty}^{+\infty} f(x)x dx\\
    &=\int_{0}^{\infty} \frac{1}{2\sigma} e^{-\frac{x}{\sigma}} x dx\\
    &=\frac{1}{2\sigma}\int_{0}^{\infty} xe^{-\frac{x}{\sigma}}dx
\end{align*}}
Let $u=-\frac{x}{\sigma}$, $\frac{du}{dx}=-\frac{1}{\sigma}$. Then, assign $dx=-\sigma du$, $x=-u\sigma$ to the equation. 
{\footnotesize\begin{align*}
\mathbb{E}(f(x)\cdot x) &= \frac{1}{2\sigma}\int \sigma^2 u e^u du\\
&=\frac{\sigma}{2}\int u e^u du\\
&=\frac{\sigma}{2}(ue^u-e^u)|_u\\
&=\frac{\sigma}{2}(-\frac{x}{\sigma}e^{-\frac{x}{\sigma}}-e^{-\frac{x}{\sigma}})|_x\\
&=-\frac{x+\sigma}{2}e^{-\frac{x}{\sigma}}|_0^\infty\\
&=\frac{\sigma}{2}
\end{align*}}

\end{proof}
\begin{theorem}
For a feature-level ($\epsilon,g$)-Differential Privacy, the overall privacy budget is $\epsilon_\ell=(\sum_{g\in [1,G]}\frac{1}{\epsilon_g})^{-1}$. 
\end{theorem}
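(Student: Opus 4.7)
The plan is to prove this by matching expected noise magnitudes between the per-group noise mechanisms and a single ``effective'' noise mechanism at the bin level, using Theorem~4.2 as the main lever. The intuition is that the total number of dummy records observable by an adversary in a bin is the sum of per-group dummy contributions, and demanding that this aggregate behave (in expectation) like the output of a single Laplace mechanism at privacy budget $\epsilon_\ell$ forces $\sigma_\ell = \sum_g \sigma_g$, which on inversion yields the harmonic-style formula in the statement.

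First I would unpack the quantities. By Theorem~4.2, for each protected-feature group $g$ the expected number of dummy records inserted into a bin is $\mathbb{E}(\|R_{g,b}\|)=\sigma_g/2$, where $\sigma_g=\Delta B/\epsilon_g$ is the Laplace scale chosen to satisfy feature-level $\epsilon_g$-DP for group $g$. Because the $G$ groups are disjoint (every record lies in exactly one protected-feature group) and noise is drawn independently per group, the total dummy mass inserted into a single bin is $\sum_{g=1}^{G}\|R_{g,b}\|$ with expected magnitude
\[
\mathbb{E}\!\left(\textstyle\sum_{g=1}^{G}\|R_{g,b}\|\right)=\sum_{g=1}^{G}\frac{\sigma_g}{2}=\frac{\Delta B}{2}\sum_{g=1}^{G}\frac{1}{\epsilon_g}.
\]

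Next I would introduce the ``overall'' mechanism: a single Laplace-noise dummy-record insertion at bin level with some budget $\epsilon_\ell$ and scale $\sigma_\ell=\Delta B/\epsilon_\ell$. Applying Theorem~4.2 once more to this hypothetical single mechanism gives expected noise magnitude $\sigma_\ell/2$. Identifying the two expressions (which is what it means for the per-group mechanisms to be, in aggregate, equivalent to an overall $\epsilon_\ell$-mechanism at the bin) yields
\[
\frac{\Delta B}{2\epsilon_\ell}=\frac{\Delta B}{2}\sum_{g=1}^{G}\frac{1}{\epsilon_g}
\qquad\Longrightarrow\qquad
\epsilon_\ell=\left(\sum_{g=1}^{G}\frac{1}{\epsilon_g}\right)^{-1},
\]
which is the claim.

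The hardest part, and the step I would be most careful about, is justifying the ``aggregate = single Laplace'' identification, because a sum of independent Laplace variables is not itself Laplace. My intended justification is that ``overall privacy budget'' here is the effective budget inferred from the magnitude of observable perturbation on the bin count visible to an adversary, and Theorem~4.2 directly relates that magnitude to the Laplace scale and hence to the budget through $\epsilon=\Delta B/\sigma$. I would also explicitly contrast this with standard composition results to preempt confusion: sequential composition over groups would give $\sum_g \epsilon_g$ and parallel composition (which applies only when neighbouring datasets differ in a single group at a time) would give $\max_g \epsilon_g$; the formula in the statement instead captures the cumulative-noise viewpoint, in which greater total noise across groups corresponds to a smaller, i.e.\ stricter, effective budget $\epsilon_\ell$. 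If desired, the argument could be tightened by replacing the first-moment matching with a variance- or tail-based matching, but the first-moment version aligns directly with Theorem~4.2 and yields exactly the stated expression.
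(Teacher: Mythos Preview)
Your proposal is correct and follows essentially the same approach as the paper: both use Theorem~4.2 to compute the per-group expected dummy count $\Delta B/(2\epsilon_g)$, sum over groups by independence, and equate the total with $\Delta B/(2\epsilon_\ell)$ to obtain $1/\epsilon_\ell=\sum_g 1/\epsilon_g$. Your additional discussion of why first-moment matching is the intended notion of ``overall budget'' and how it differs from sequential/parallel composition is not in the paper's proof but is a welcome clarification rather than a deviation.
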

\begin{proof}
The expectation of dummy records number for one group $g$ is $\mathbb{E}(\|R_{g,b}\|)=\frac{\Delta B}{2\epsilon_g}$. 
Because the Laplace noise for each group is independent from each other, the expectation of dummy records number for each bin $b$ is:

{\footnotesize\begin{align*}
    \mathbb{E}(\|R_{b}\|)&=\sum_{1\leq g\leq G}\mathbb{E}(\|R_{g,b}\|)\\
    &=\sum_{1\leq g\leq G} \frac{\Delta B}{2\epsilon_g}\\
\end{align*}}

Then, the expectation of dummy records number for one bin $b$ regarding to the overall privacy budget $\epsilon_\ell$ is $\mathbb{E}(\|R_{b}\|)=\frac{\Delta B}{2\epsilon_\ell}$. 
{\footnotesize\begin{align*}
    \frac{\Delta B}{2\epsilon_\ell}&=\sum_{1\leq g\leq G} \frac{\Delta B}{2\epsilon_g}\\
    &=\frac{\Delta B}{2}\sum_{1\leq g\leq G}\frac{1}{\epsilon_g}
\end{align*}}

{\footnotesize\begin{align*}
    \frac{1}{\epsilon_\ell}=\sum_{1\leq g\leq G}\frac{1}{\epsilon_g}
\end{align*}}

This concludes the proof. 
\end{proof}

\section{Fairness-aware feature-level DP for PPRL}
\label{sec:new_dp_notion}
Assume a protected feature $F$ divides the dataset into $G$ disjoint groups.
Then the fairness aware feature-level DP is defined as:

\begin{defi}[Fairness-aware feature-level DP]
A randomized mechanism $M:\Sigma\to \mathcal{O}$ satisfies ($\epsilon_{1}, \cdots, \epsilon_{G}$)- fairness aware DP if
\begin{enumerate}

    \item this mechanism satisfies $(\epsilon_g,\sigma_g,f)$-fairness aware DP,
    \item The output is constrained on 
    \begin{itemize}
    \item Fairness: The performance of the outputs between different groups of protected feature should be equalized. And Equalized Odds fairness criteria is used. 
    \item Cost: The efficiency is analysed in terms of the additional communication and computational costs introduced by DP noise for PPRL, which is the number of fake candidate pairs in this case. 
    \end{itemize}
\end{enumerate}

\end{defi}





In the following, we define the feature-level DP constrained on fairness and cost.

\subsection{Fairness constrained feature-level DP}
\label{subsec:fairness_DP}



A true match can only be an original record pair. A true non-match can be either an original record pair or a dummy record pair. The number of the dummy records is proportional to the number of the true non-matches.  
A false non-match is caused by the bias in original records in the dataset. 
The number of these false non-matches can be estimated by sampling the dataset for an approximation value. 

An original record pair can be classified as a false match if the similarity between their Bloom filters is high. The number of these false positive matches can be obtained by taking a sampling of the dataset for an estimation. On the other hand, the added dummy records can cause false positive matches. As the dummy record is created from flipping bits in original binary records, with a flipping probability \nw{$flip_g$} for feature $g\in [g_1,\cdots,g_G]$. If \nw{$flip_g$} is too small, then the similarity between the dummy record and the original record can be large enough to be classified as a false match.
For brevity, threshold based classifier is used to determine matches and non-matches, and dice coefficient is used as similarity function between two Bloom filters. 

The Dice-coefficient of two BFs (b1, b2) is calculated as: 

{\footnotesize\begin{align}
    Dice\_sim = \frac{2c}{\sum^2_{i=1}x_i} 
\end{align}}

where c is the number of common bit positions that are set to 1 in both BFs, and $x_i$ is the number of bit positions set to 1 in $b_i$, $1\le i\le 2$.

So, for threshold based classifier, when $T$ is the threshold, a record pair for feature $g$ is classified as a false match if:

{\footnotesize\begin{align}
    P(\hat{Y} = 1| g, Y = 0) = P(Dice\_sim_g > T)
\end{align}}

where the dice-coefficient $Dice\_sim_{g,dum}$ between a dummy record and its progenitor record is calculated as:

\nw{
{\footnotesize\begin{align}
    Dice\_sim_{g,dum} &= \frac{2c}{\sum^2_{i=1}x_i} \\\nonumber
    &=\frac{2(1-flip_g)n_1}{n_1+(1-flip_g)n_1+flip_g(n_{l}-n_1)} \\\nonumber
    &=\left( \frac{flip_g n_{l}}{2(1-flip_g)n_1}+1 \right) ^{-1}
\end{align}}
}

\noindent
where \nw{$flip_g$} is the flipping probability for feature $g\in [g_1,\cdots,g_G]$, $n_{l}$ is the length of one Bloom filter.
As $n_1$ is the number of bit positions that are set to 1 in the original BF. 
The value of $n_1$ follows Binomial Distribution with number of $n_{l}$ and probability $p$, $n_1\approx B(n_{l},p)$. The probability of getting exactly $n_1$ successes in $n_{l}$ independent Bernoulli trials is given by the probability mass function:

{\footnotesize\begin{align*}
P(n_1)=\binom{n_1}{n_{l}}p^{n_{l}}
\end{align*}}

\noindent
for $n_1=0,1,2,\cdots,n_{l}$, $p=\frac{1}{2}$.

With the probability distribution of $n_1$, the probability that a dummy record and its progenitor record is classified as a false match FP is:

\nw{
{\footnotesize\begin{align}
    \nw{P(FP_{g,dum}) }&= P(Dice\_sim_g > T)\\\nonumber
    &=P\left( \left( \frac{flip_g n_{l}}{2(1-flip_g)n_1}+1 \right) ^{-1} > T \right) \\\nonumber
    &=P\left( n_1>\frac{n_{l}\,T flip_g}{2(1-T)(1-flip_g)} \right) \\\nonumber
    &=\sum_{\frac{n_{l} T flip_g}{2(1-T)(1-flip_g)}}^{n_{l}} \binom{n_1}{n_{l}}p^{n_{l}}\\\nonumber
\end{align}}
}

This equation can not be simplified. Instead, as the length of the Bloom filter is large enough, where in our case the length is 300, we approximate the probability of getting $n_1$ successes in $n_{l}$ using Central limit theorem.

\begin{theorem}
\label{theorem:fp}
Suppose $\{X_1,X_2,\cdots,X_n\}$ is a sequence of i.i.d. random variables with $\mathbb{E}(X)_i=\mu$ and $Var[X_i]=\sigma^2<\infty$. Sum of the variables is ${S}=\sum_{i=1}^{i=n}{X_i}$. Then, as $n$ increases to infinity, the random variables $\frac{S-n \mu}{\sqrt{n \sigma^2}}$ converge in distribution to a normal ${\mathcal {N}}(0,1)$.
The probability of a false match with dummy record and its progenitor can be derived as follows:

\nw{\footnotesize\begin{align}
\label{eq:fp}
     P(FP_{g,dum})
     &= {\frac {1}{2}}\left[1-\operatorname{erf} \left({{\frac{ \sqrt{{n_{l}}}\,T flip_g}{2\sqrt{2}\sigma(1-T)(1-flip_g)}}-\frac{\sqrt{n_{l}}\mu  }{\sigma {\sqrt {2}}}}\right)\right]
\end{align}}
\end{theorem}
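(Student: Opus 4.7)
The plan is to massage the tail probability $P(n_1 > k)$ with $k = \tfrac{n_l T \cdot flip_g}{2(1-T)(1-flip_g)}$ into the claimed erf-form via a direct application of the Central Limit Theorem to the Bernoulli sum representation of $n_1$.

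First I would write $n_1 = \sum_{i=1}^{n_l} X_i$, where $X_1,\dots,X_{n_l}$ are i.i.d.\ Bernoulli$(p)$ with $p=\tfrac{1}{2}$, so that $\mu = \mathbb{E}(X_i)=\tfrac12$ and $\sigma^2 = \mathrm{Var}(X_i)=\tfrac14$. This is exactly the setup of the CLT statement in the theorem, so the standardized sum $Z_{n_l} := (n_1 - n_l\mu)/(\sigma\sqrt{n_l})$ converges in distribution to $\mathcal{N}(0,1)$. Because the paper fixes $n_l = 300$, invoking the CLT as an approximation (rather than a limit) is the right mode of justification, and I would state this explicitly, noting the prior paragraph that motivates the approximation.

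Next I would rewrite the event $\{n_1 > k\}$ as $\{Z_{n_l} > (k-n_l\mu)/(\sigma\sqrt{n_l})\}$ and approximate its probability by $1 - \Phi\bigl((k-n_l\mu)/(\sigma\sqrt{n_l})\bigr)$, where $\Phi$ is the standard normal CDF. Substituting the specific $k = \tfrac{n_l T \cdot flip_g}{2(1-T)(1-flip_g)}$, the factor $n_l/\sqrt{n_l} = \sqrt{n_l}$ collapses to give an argument of the form
\[
\frac{\sqrt{n_l}\,T\,flip_g}{2\sigma(1-T)(1-flip_g)\sqrt{2}} \;-\; \frac{\sqrt{n_l}\,\mu}{\sigma\sqrt{2}},
\]
after using the identity $\Phi(x) = \tfrac12[1+\mathrm{erf}(x/\sqrt{2})]$, equivalently $1-\Phi(x) = \tfrac12[1-\mathrm{erf}(x/\sqrt{2})]$. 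This yields exactly Equation~\eqref{eq:fp}.

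The proof is essentially bookkeeping once the CLT is invoked, so I do not expect a serious obstacle; the only real care points are (i) keeping the $\sqrt{n_l}$ factors straight when pulling $n_l$ out of the threshold $k$, and (ii) being clean about the erf-vs-$\Phi$ conversion so the constants $2\sqrt{2}$ in the denominator appear correctly. A secondary concern, which I would flag rather than fully resolve, is the accuracy of the normal approximation in the far tails (where the threshold may push the argument of erf well into the tail region); a continuity correction could be mentioned as a refinement but is not needed to prove the stated asymptotic identity.
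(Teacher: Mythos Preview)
Your proposal is correct and follows essentially the same approach as the paper: the paper explicitly invokes the Central Limit Theorem to approximate the binomial tail $P(n_1 > k)$ (noting $n_l = 300$ is large), and your derivation simply fills in the standardization and $\Phi$-to-$\operatorname{erf}$ bookkeeping that the paper leaves implicit. There is no substantive difference in method.
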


\begin{remark}
When the value of flipping probability $flip_g$ for protected feature $g$ increases, the probability of a pair with dummy record being classified as a false positive with threshold classifier decreases.  
\end{remark}

\begin{remark}
When the value of flipping probability $flip_g$ for protected feature $g$ excess a certain value, $P(FP_{g,dum})=0$.
\end{remark}

The number of false positive matches from dummy records is calculated as:

\nw{\footnotesize\begin{align}
     FP_{g,dum}
     &\propto {\frac {n_{pair\_dum}}{2}}\left[1-\operatorname{erf} \left({{\frac{ \sqrt{{n_{l}}} \,T flip_g}{2\sqrt{2}\sigma(1-T)(1-flip_g)}}-\frac{\sqrt{n_{l}}\mu  }{\sigma {\sqrt {2}}}}\right)\right]
\end{align}}

where $n_{pair\_dum}$ is the number of record pairs with at least one dummy record.

We use the following fairness loss function, which is the maximum of distance between the false positive rate of two features and distance between the false negative rate of two features:

 {\footnotesize\begin{align*}
 &fairness\_loss =  \max [abs(Pr(\hat{Y} = 1| A = a_1, Y = 0) \\\nonumber 
 &- Pr(\hat{Y} = 1 | A = a_2, Y = 0)), abs(Pr(\hat{Y} = 0| A = a_1, Y = 1) \\\nonumber 
&- Pr(\hat{Y} = 0 | A = a_2, Y = 1))],
\label{eq:fairness_loss}
 \end{align*}}

\noindent 
where $Y$ and $\hat{Y}$ are the true and predicted class labels, respectively, with two values of 1 (for matches) and 0 (for non-matches).

Fairness is calculated as:

{\footnotesize\begin{align}
  fairness &=1.0 - fairness\_loss 
\end{align}}

The false positive matches rate for feature $g$ is:

{\footnotesize\begin{align}
\label{eq:fpr}
FP_g\_rate &= \frac{FP_g}{FP_g+TN_g} \\\nonumber 
&=\frac{FP_{g,dum}+FP_{g,ori}}{FP_{g,dum}+FP_{g,ori}+TN_{g,dum}+TN_{g,ori}}\\\nonumber
\end{align}}

The original records in the dataset contribute to false negative rate and \nw{false positive} rate, while dummy records have effects on false positive rate but not false negative rate. So the false positive matches and false positive matches from original record pairs are regarded as constant values. 
By fixing the number of dummy records for different features, fairness loss can be simplified as a function of flipping probabilities \nw{$\mathcal{F}(flip_1,\cdots,flip_g)$}.

\begin{defi}[($flip_g$, Fairness)- Constrained DP]
\label{defi:fairness-constrained}
Given  $flip_g \in [flip_1 \cdots flip_g] \ge 0$, a DPRL randomised mechanism $M:\Sigma\to \mathcal{O}$ satisfies ($flip_g$,Cost)-Constrained DP if 
there exists a solution of ($flip_1^* \cdots flip_g^*$), such that
\begin{itemize}
    \item M satisfies feature-level ($\epsilon,g$)-DP;
    \item Minimises constrained optimization problem:
    \begin{align*}
        \min \mathcal{F}(flip_1,\cdots,flip_g).  
    \end{align*}
\end{itemize}
\end{defi}


\subsection{Cost constrained feature-level DP}
\label{subsec:cost_fairness_DP}
A blocking protocol $\mathcal{B}$ and blocking strategy $\mathcal{B}^\mathcal{S}$ is used to block the dataset into bins.
Given $\epsilon_1 \cdots \epsilon_g$ for $G$ protected groups of the sensitive feature, the expected number of dummy records that need to be added for each of these groups for DP guarantees per bin $b$ $\mathbb{E}(\|{R}_{g,b}^+\|)$ is a constant denoted by $C_{g,b}$, where $1 \leq g \leq G$, $1 \leq b \leq k$, and ${R}_{g,b}$ corresponds to the dummy records of group $g$ from bin $b$. 

By fixing the flipping probability for each feature $g\in [1,\cdots,G]$, fairness loss function can be simplified as a function of privacy budgets $\mathcal{F}(\epsilon_1 \cdots \epsilon_g)$. 


The number of candidate matches is a random variable, denoted by $\mathbb{C}$, with expected value

{\footnotesize\begin{align}
    \mathbb{E}(\mathbb{C})= \sum_{(b_A,b_B)\in\mathcal{B^S}} \mathbb{E}(\|\Tilde{Bin}_{b_A}(D_A)\|)  \mathbb{E}(\|\Tilde{Bin}_{b_B}(D_B)\|)
\end{align}}


The expected privacy cost $\mathbb{C}_{g,dum}$ for group $g$ is:

{\footnotesize\begin{align}
    \mathbb{E}(\mathbb{C}_{g,dum}) 
    & = \sum_{(b_A,b_B)\in\mathcal{B^S}}N_{b_A,g}C_{b_B,g}
    +\sum_{(b_A,b_B)\in\mathcal{B^S}}N_{b_B,g}C_{b_A,g}\\\nonumber &+\sum_{(b_A,b_B)\in\mathcal{B^S}}C_{b_A,g}C_{b_B,g}\\ 
    & \approx\sum_{(b_A,b_B)\in\mathcal{B^S}}N_{b_A,g}\mathbb{E}(\|R_{g,b_B}\|)\\\nonumber
    &+\sum_{(b_A,b_B)\in\mathcal{B^S}}N_{b_B,g}\mathbb{E}(\|R_{g,b_A}\|)\\\nonumber
    &+\sum_{(b_A,b_B)\in\mathcal{B^S}}\mathbb{E}(\|R_{g,b_A}\|)\mathbb{E}(\|R_{g,b_B}\|)\\\nonumber
    &=(N_{A,g}+N_{B,g})\frac{\Delta B}{2\epsilon_g}
    +N_{bins}\frac{\Delta B^2}{4\epsilon_g^2}
\end{align}}

The expected False positive rate from Equation.~\ref{eq:fpr} is:
{\footnotesize\begin{align}
\label{eq:fpr2}
    FP_g\_rate = \frac{P(FP_{g,dum})\mathbb{E}(\mathbb{C}{g,dum})+FP_{g,dum}}{FP_{g,dum}+FP_{g,ori}+TN_{g,dum}+TN_{g,ori}}
\end{align}}

\hspace{1pt}

\begin{defi}[($\epsilon$,Cost)-Constrained feature-level DP]
\label{defi:cost-constrained}
Given  $\epsilon_g \in [\epsilon_1 \cdots \epsilon_G] \ge 0$, a DPRL randomised mechanism $M:\Sigma\to \mathcal{O}$ satisfies ($\epsilon_\ell$,Cost)-constrained feature-level DP if 
there exists a solution of ($\epsilon_1^* \cdots \epsilon_g^*$), such that
\begin{itemize}
    \item M satisfies ($\epsilon^*,g$)-feature-level DP;
    \item ${\epsilon_\ell}^{-1}=\sum_{1\le g\le G}{\epsilon_g}^{-1}$;
    \item fairness loss is minimised. 
\end{itemize}
\end{defi}

It is obvious that the privacy budget for each group and cost are conflicting with each other. So we use Lagrangian function to optimize this problem. Consider the following optimization problem minimizing the overall cost with privacy budget constraints:

{\footnotesize\begin{align}
    &\min \mathcal{F}(\epsilon_1 \cdots \epsilon_g)\\
    &s.t.
    \sum_{1\le g\le G}{\epsilon_g}^{-1}- {\epsilon_\ell}^{-1}=0 
\end{align}}

\begin{figure*}[!t]%
    \centering
    \includegraphics[width=0.85\linewidth,trim={2cm 0 2cm 0}]{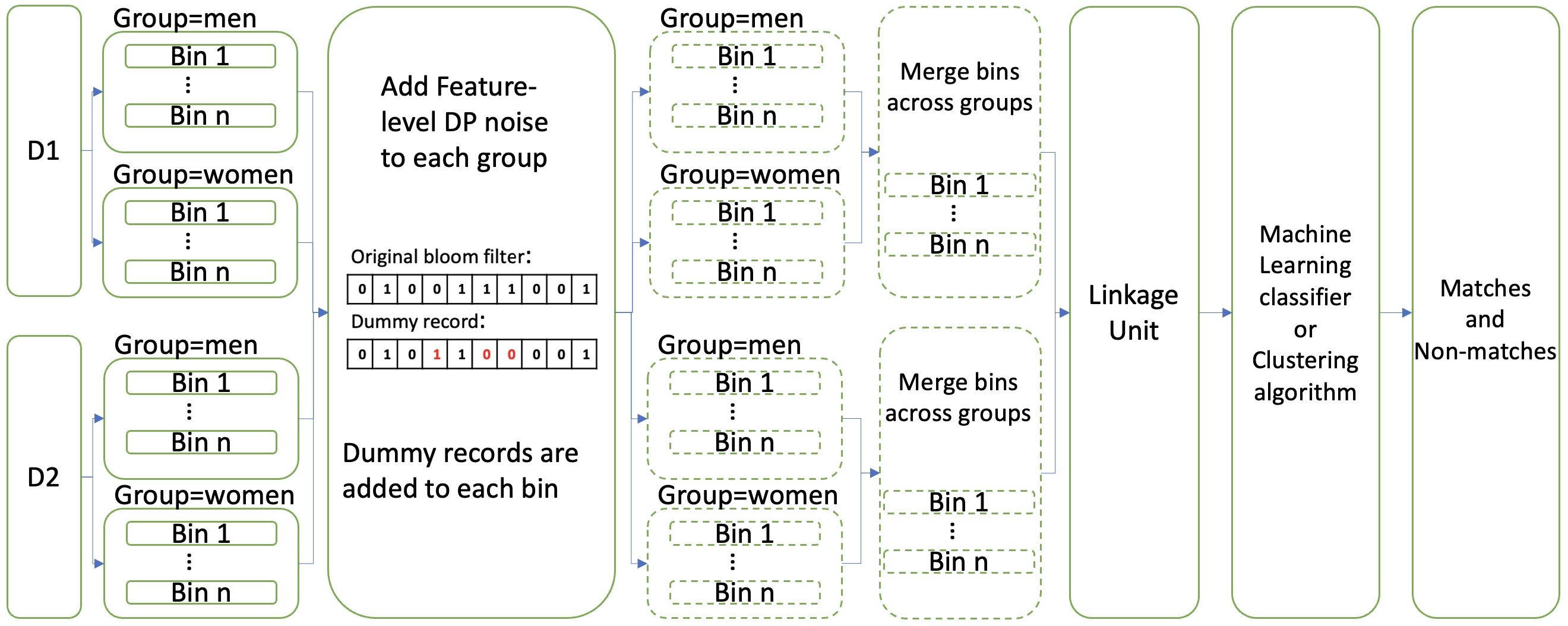}
    \caption{Proposed framework for PPRL with fairness and cost-constrained DP, with $G=2$, $g_1=$`men' and $g_2=$`women'}
    \label{fig:framework}
\end{figure*}

\section{Framework for PPRL with Fairness and cost constrained DP} 
\label{sec:framework}
 
In this section, we present a framework for PPRL based on Bloom filter encoding and the new notions of \nw{differential privacy (DP)} for fairness and computational cost-aware PPRL. Records are encoded into Bloom filters at the data owners site. The encoded records are then sent to a linkage unit which performs All-Pairwise Comparisons (APC) to determine the linked or matched pairs of records.

Blocking techniques are useful as a pre-processing step prior to APC to achieve efficiency and high recall in record linkage. 
In PPRL with DP Blocking, DP hides the presence or absence of a single record, and hence the number of candidate matches stays roughly the same on $D_b$ and $D_b'$ that differ in a single record. It provides a strong end-to-end privacy guarantee – it leaks no information other than the sizes of the databases and the set of matching records. So, the number of dummy records that are added to each blocked bin is uniformly distributed to different groups. Considering fairness-bias in data, we introduce PPRL with feature level DP Blocking method, with which the dummy records added for different groups are biased and fairness-aware:

\begin{algorithm}[t]
\footnotesize
    \caption{\textit{Feature-level DP blocking method}}
    \hspace*{\algorithmicindent} \textbf{Input:} $rec_i$, $i=1,\cdots,n$ n is the size of original dataset $X$, $bin\_label_b$, $b = 1,\cdots,B$ B is the number of bins, $nb$ is length (in bits) of bin label,  $\epsilon_i$, $Pr\_flip_g$, $g=1,...,G$ G is the total number of protected features. 
    \begin{algorithmic}[1]
    \State Select $n_b$ number of bits in Bloom filter and mark as Bloom filter labels. 
    \For{$i=1,\cdots,n$}\Comment{Do for each record in dataset $X$}
    \State allocate $X_i$ to each bin according to its Bloom filter label.
    \EndFor
    \For{$b = 1,\cdots,B$}\Comment{Do for each bin}
    \For{$g = 1,\cdots,G$}\Comment{Do for each protected group}
    \State Obtain $n\_dummy_g$ number of dummy records from $\epsilon_g$-DP noise.
    \If{$n\_dummy\_g > n\_b_g$} \Comment{$n\_dummy\_g$ is less or equal to the number of records with protected feature g in the bin}
    \State $n\_dummy\_g \gets n\_b_g$
    \EndIf
    \If{$n\_dummy\_g<0$}
    \State $n\_dummy\_g \gets 0$ \Comment{No records are deleted}
    \EndIf
    \State Randomly select $n\_dummy\_g$ original records of protected feature g from current bin. 
    \State Flip Bloom filter bits for each record with flipping probability $Pr\_flip\_g$. \Comment{Create dummy records} 
    \State Add dummy record to current bin. 
    \EndFor
    \EndFor  
    \end{algorithmic}
\end{algorithm}

\begin{itemize}
    \item The number of all dummy records added is calculated based on all groups.
    \item The number of dummy records added as one group label is calculated by feature level DP noise. 
    \item Given the dummy records is generated from modifying original record by flipping each bit in Bloom filter with a probability, The data bias of dummy records is also fairness aware regarding to the flipping probability. 
    \item Dummy records for female and male can be manipulated separately both on numbers of dummy records and error rate/flipping probability. 
    \item In each bin, the number of dummy records for each group is independent from each other. Thus, the added feature level DP noise is fairness aware and hide records from different groups. 
\end{itemize}

An overview of our proposed framework for fairness-aware PPRL with Bloom filter encoding and new notions of DP is shown in Fig.~\ref{fig:framework}. The proposed feature-level DP blocking algorithm for PPRL with new DP notions is outlined in Algorithm~1, and is described in the following sub-section.

\subsection{Algorithm Description}

In this algorithm, the two data owners (parties) Alice and Bob agree on a blocking function $\mathcal{B}$ with $k$ bins and blocking strategy $\mathcal{B}^S$. Then, a chosen number of dummy records are added to each bin to make the bin sizes deferentially private. The dummy records are carefully created such that they do not match with any record. Linkage is done using any machine learning classifier (e.g. logistic regression classifier) or simple threshold-based classifier.

\section{Experimental Evaluation}
\label{sec:experiments}

We conducted our experiments on three-pairs of datasets sampled from two sources/domains which contain person-specific data: 1) Australian Bureau of Statistics (ABS) datasets and 2) North Carolina Voter Registration (NCVR) datasets.

\begin{enumerate}

\item \textbf{ABS}: This is a synthetic dataset used internally for linkage experiments at the Australian Bureau of Statistics (ABS)~\footnote{Available from \url{https://github.com/cleanzr/dblink-experiments/tree/master/data}}. It simulates an employment census and two supplementary surveys. We sampled 5000 records for two parties with area level (categorical), mesh block (categorical), sex, industry (multi-categorical), and part time / full time (binary-categorical).

\begin{figure*}[!t]
\centering
        \CommonHeightRow{%
            \begin{floatrow}[2]%
                \ffigbox
                {\includegraphics[width=0.36\textwidth]{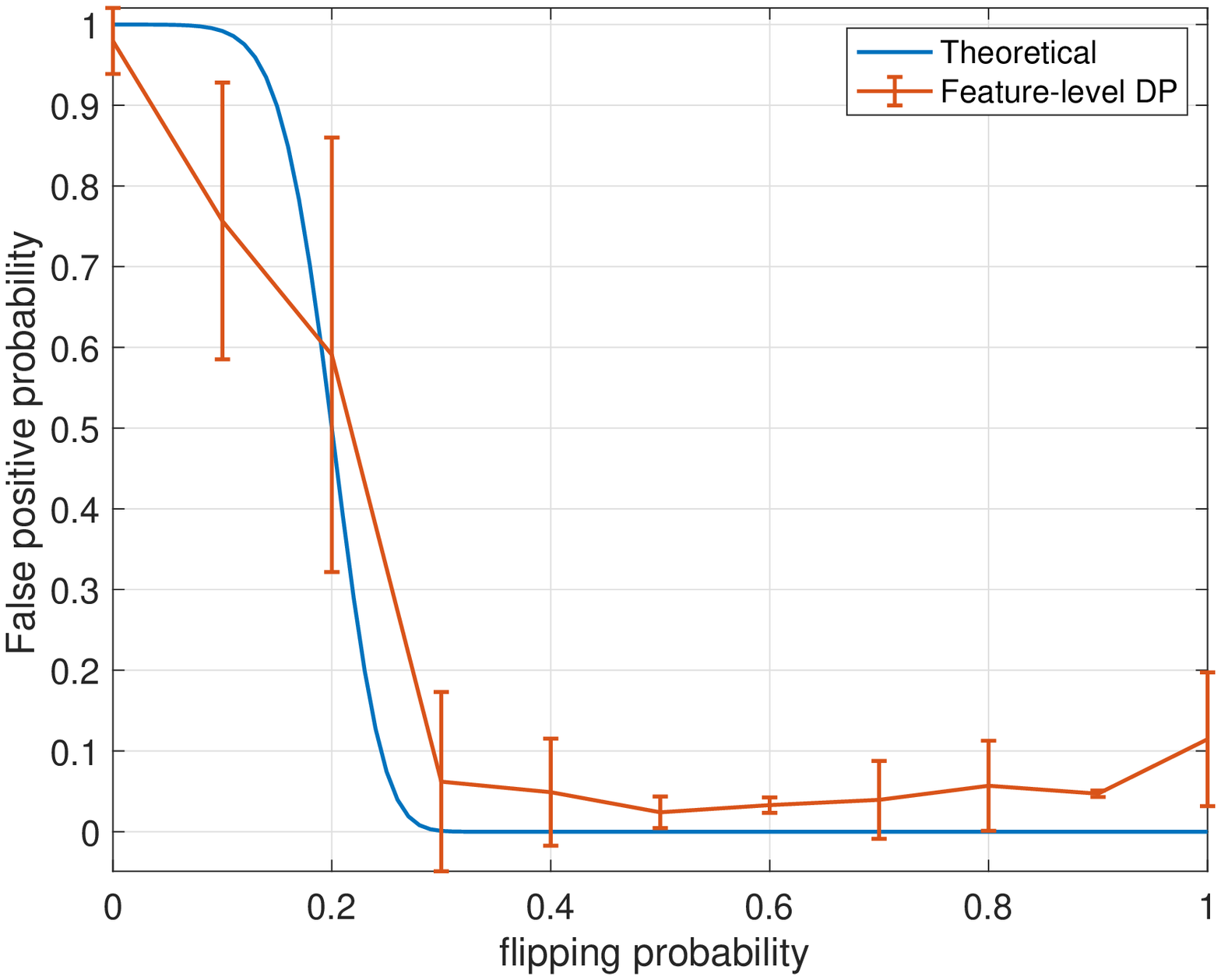}}
                {\vspace{-0.15cm}\caption{False Positive probability theoretical prediction in Theorem.~\ref{theorem:fp} and simulation results versus flipping probability with threshold classifier ($t=0.8$) on NCVR-no mod dataset}
                \label{fig:fp_probability}}
                \ffigbox
                {\includegraphics[width=0.36\textwidth]{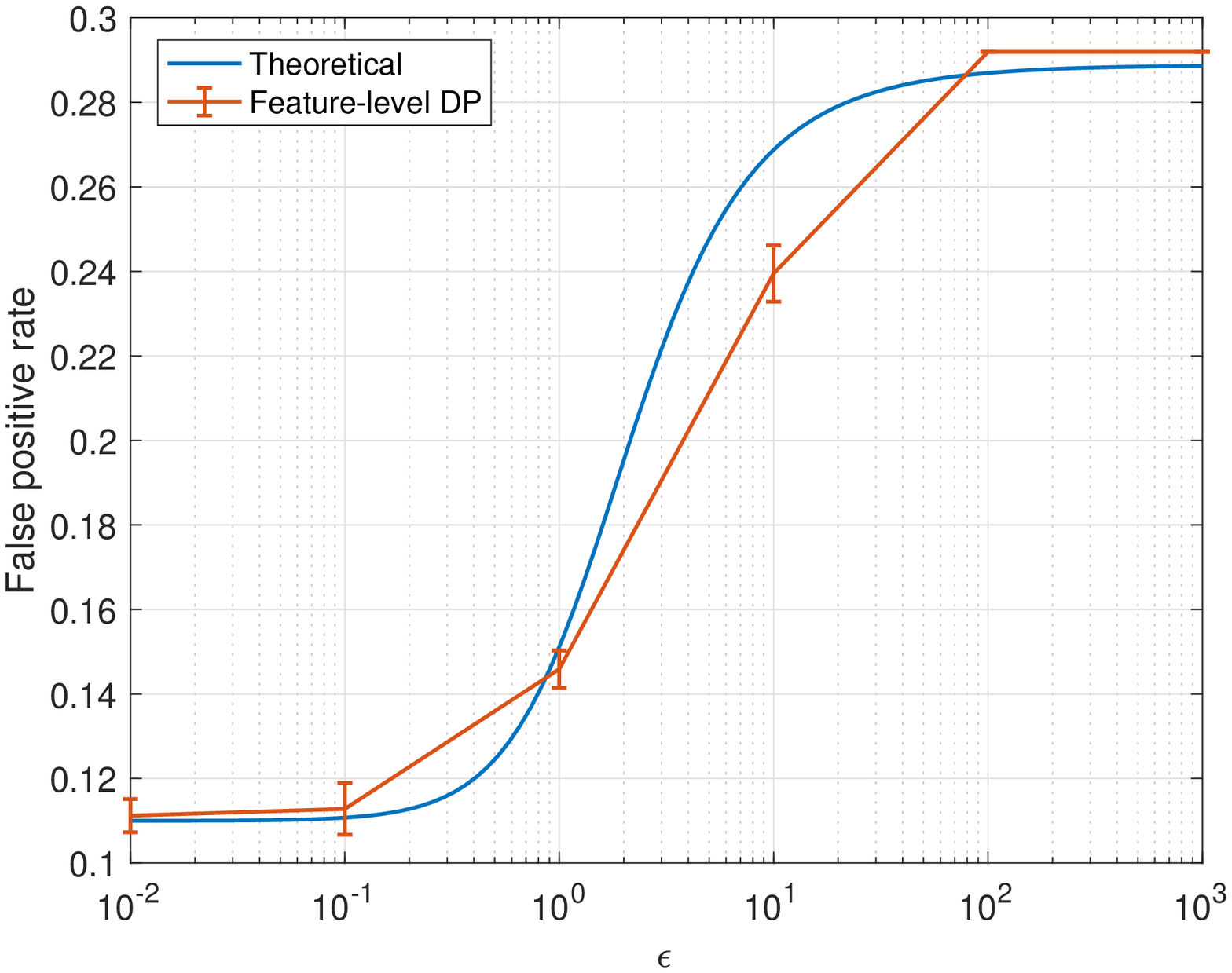}}
                {\vspace{-0.15cm}\caption{False Positive rate theoretical prediction in Equation.~\ref{eq:fpr} and simulation results versus privacy budget $\epsilon$ with threshold classifier ($t=0.8$) (left) and logistic linear regression classifier (right) on NCVR-no mod dataset}
                \label{fig:fp_rate}}
            \end{floatrow}}%
\end{figure*}

\item \textbf{NCVR-no-mod}: We extracted a pair of datasets with 5000 records each and a pair of datasets with 10000 records each for two parties from the \dv{real} North Carolina Voter Registration
(NCVR) database~\footnote{Available from
\url{http://dl.ncsbe.gov/data/}} 
with 50\% of matching
records between the two parties. 
Ground-truth
is available based on the voter registration identifiers. We used given name (string), surname (string), suburb (string), postcode (string), and gender (categorical) attributes for the linkage.

\item \textbf{NCVR-mod}: We generated another series of \dv{synthetic} NCVR datasets for each pair of NCVR datasets generated above, 
where we included 50\% synthetically
corrupted records
using the GeCo tool~\cite{Tra13}.
We applied various corruption functions from the GeCo tool on randomly selected attribute values, including character edit operations
(insertions, deletions, substitutions, and transpositions), and
optical character recognition and phonetic modifications
based on look-up tables and corruption rules~\cite{Tra13}. This allows us to evaluate how real data errors impact the linkage quality and fairness of linkage decisions.

\end{enumerate}

\dv{In our experiments, we use two binary protected features, which are gender with `male' and `female' groups, and age group with `young' ($<= 45$) and `old' ($> 45$) groups. 
}
The evaluation metrics used to measure the matching or linkage performance, fairness of linkage, computational efficiency, and privacy guarantees are:
\begin{enumerate}
    \item \textbf{Matching performance:}
\dv{Given the number of true and false positives/matches $TP$ and $FP$, and the number of true and false negatives/non-matches $TN$ and $FN$ predicted by the classifier:}
    \begin{enumerate}
        \item \textbf{False positive rate} is the proportion of true negatives/non-matches that are predicted by the linkage classifier as positives/matches, i.e. $\frac{FP}{FP+TN}$.
        \item \textbf{Precision} is the percentage of correctly classified matches against all pairs that are classified as matches: $precision = \frac{TP}{TP + FP}$. It is also known as true positive rate (TPR).
        \item \textbf{Recall} is the percentage of correctly classified matches against all true matches: $recall = \frac{TP}{TP + FN}$. \footnote{\dv{Please note that while Bloom filter encoding does not admit $FN$s (i.e. $FN=0$ and hence recall is $1.0$), the recall of the linkage algorithm may not be $1.0$ due to Differential privacy noise addition, the blocking/binning quality, or data errors and variations that could lead to recall loss.}}
        \item \dv{\textbf{$F^*$-measure~\cite{Han21}} has been used in the recent record linkage literature as an alternative to $f1$-measure, as the $f1$-measure has a limitation of appropriately measuring the linkage quality due to the relative importance given to precision and recall. $F^*$-measure is calculated as $F^* = \frac{TP}{(TP + FP + FN)}$.}
    \end{enumerate}

        \item \textbf{Fairness:}
        \begin{enumerate}
            \item \textbf{Fairness loss}: Equalized Odds fairness criteria is used and the fairness loss based on equalized odds criteria is calculated as the maximum of the absolute difference between FPR of the two groups (based on the protected feature, e.g. male and female groups) and the absolute difference between FNR of the two groups. 
            
            {\footnotesize\begin{align*}
            fairness\_loss =  &\max [
            abs(Pr(\hat{Y} = 1| A = a_1, Y = 0) \\\nonumber 
            &- Pr(\hat{Y} = 1 | A = a_2, Y = 0)),\\\nonumber 
            &abs(Pr(\hat{Y} = 0| A = a_1, Y = 1) \\\nonumber 
            &- Pr(\hat{Y} = 0 | A = a_2, Y = 1))],
           \label{eq:fairness_loss}
            \end{align*}}
            
            where $Y$ and $\hat{Y}$ are the true and predicted class labels, respectively, with two values of 1 (for matches) and 0 (for non-matches). Fairness is calculated as:
           
            {\footnotesize\begin{equation}
            fairness = 1.0 - fairness\_loss
            \label{eq:fairness}
            \end{equation}}
        \end{enumerate}
    \dv{    
    \item \textbf{Computational cost:}
    \begin{enumerate}
        \item \textbf{Number of record pair comparisons:} Computational cost of linkage is measured as the required number of record pair similarity comparisons (including real and dummy pairs).
    \end{enumerate}
    }
    \dv{    
    \item \textbf{Privacy:}
    \begin{enumerate}
        \item \textbf{Differential privacy budget $\epsilon$:} privacy is measured using the privacy budget $\epsilon$ for Differential privacy guarantees. This is one of the widely used metrics under the indistinguishability category, as described in the taxonomy proposed in~\cite{Wag18}).
    \end{enumerate}
    }    
\end{enumerate}

\begin{figure*}[!t]
    \centering
    \includegraphics[width=0.99\linewidth]{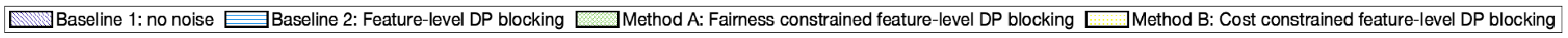} \vspace{0.5cm}\vfill
    \RawFloats
    \begin{minipage}{0.49\linewidth}
    \includegraphics[width=0.49\linewidth,trim={2.5cm 2.cm 2.5cm 2.cm},clip]{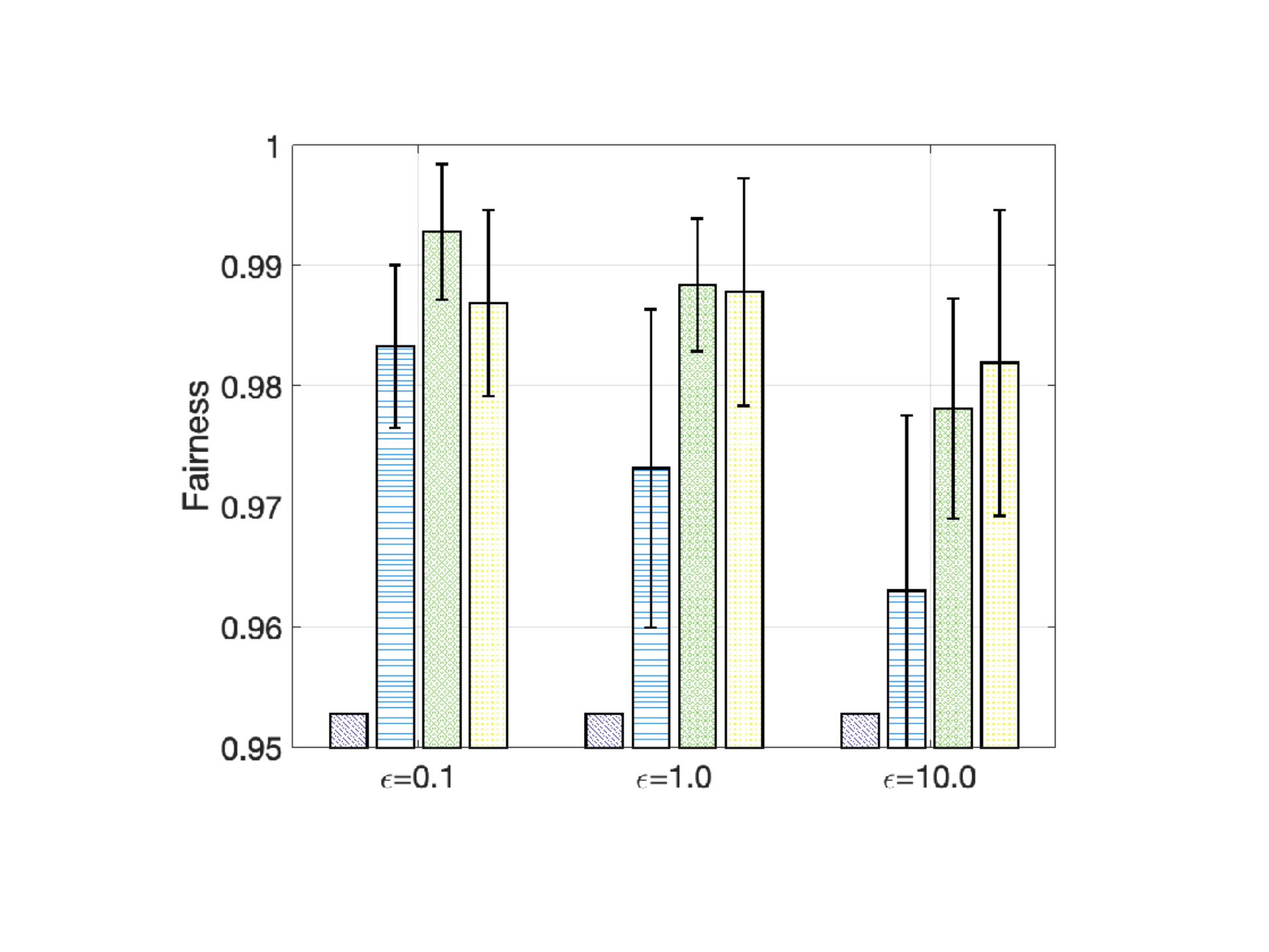}
    \includegraphics[width=0.49\linewidth,trim={2.5cm 2.cm 2.5cm 2.cm},clip]{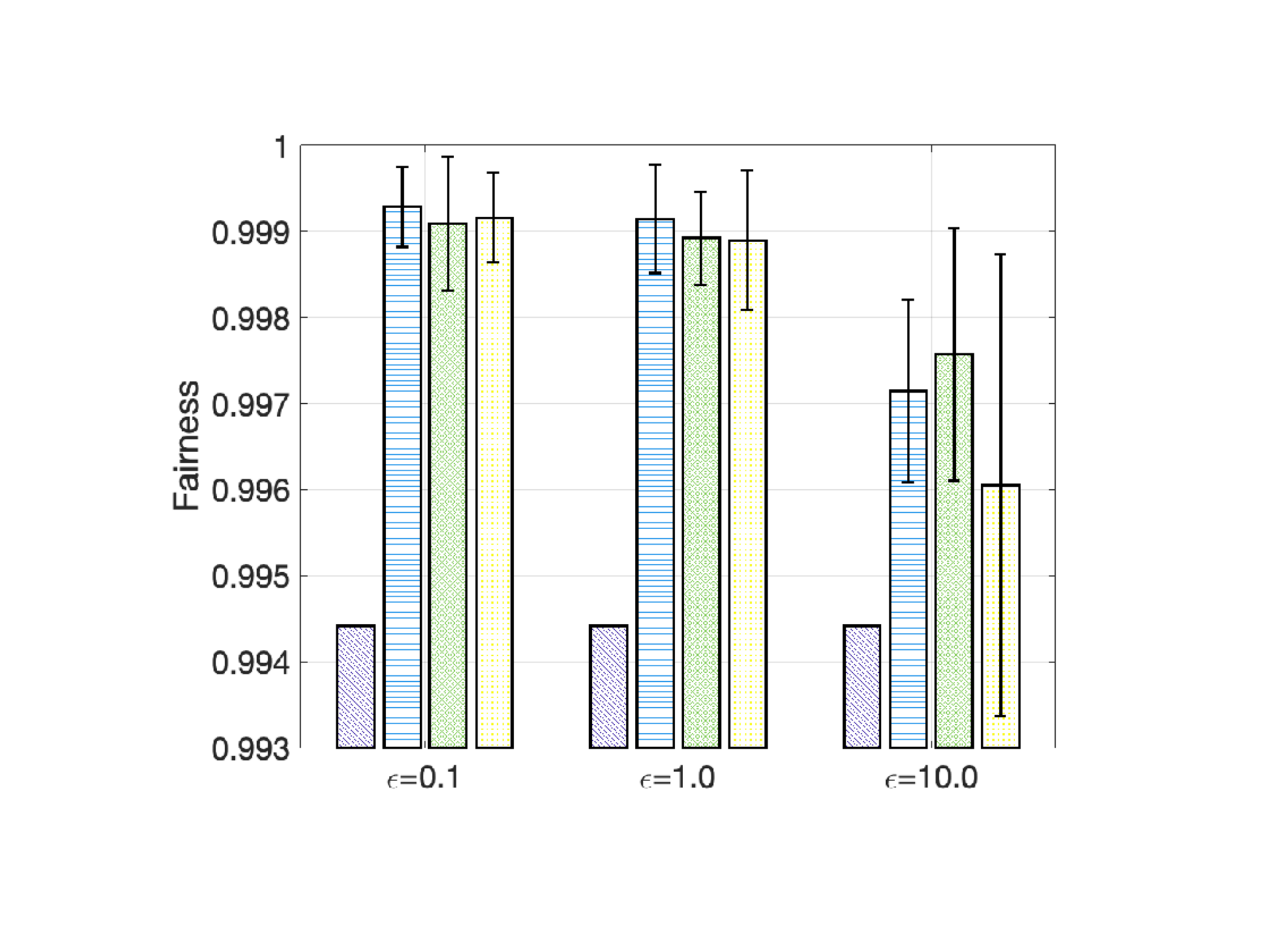}
   \caption{\nw{Fairness comparison between four scenarios and three privacy levels with threshold classifier ($t=0.8$) (left) and logistic linear regression classifier (right) on NCVR-no mod dataset with gender as sensitive feature}}
    \label{fig:nomod_fairness}
    \end{minipage}\hfill
    \begin{minipage}{0.49\linewidth}
    \includegraphics[width=0.49\linewidth,trim={2.5cm 2.cm 2.5cm 2.cm},clip]{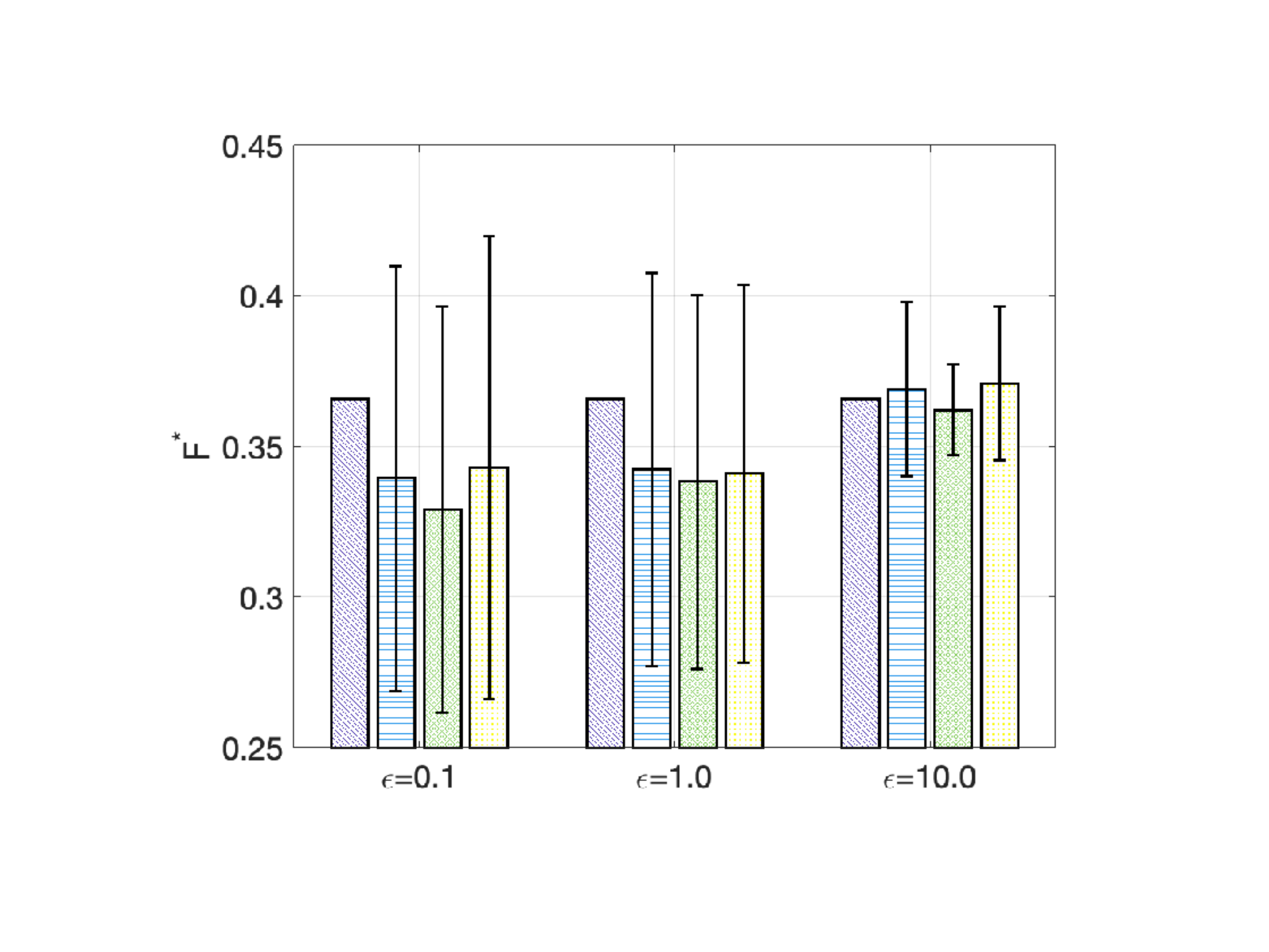}
    \includegraphics[width=0.49\linewidth,trim={2.5cm 2.cm 2.5cm 2.cm},clip]{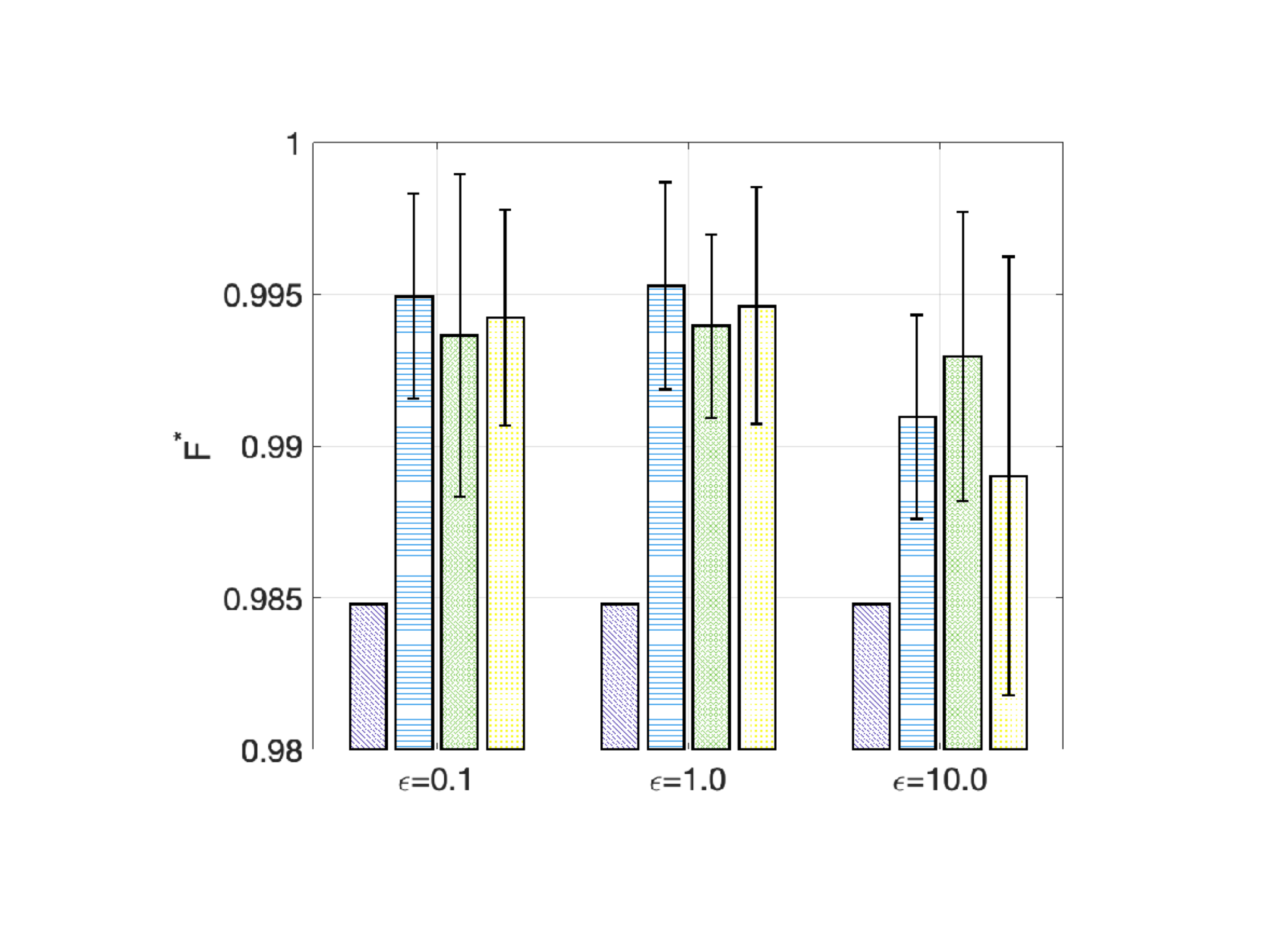}
    \caption{\nw{$F^*$-measure comparison between four scenarios and three privacy levels with threshold classifier ($t=0.8$) (left) and logistic linear regression classifier (right) on NCVR-no mod dataset with gender as sensitive feature}}
    \label{fig:nomod_f1}
    \end{minipage}
\end{figure*}

\begin{figure*}[!t]
    \centering
    \RawFloats
    \begin{minipage}{0.49\linewidth}
    \includegraphics[width=0.49\linewidth,trim={2.5cm 2.cm 2.5cm 2.cm},clip]{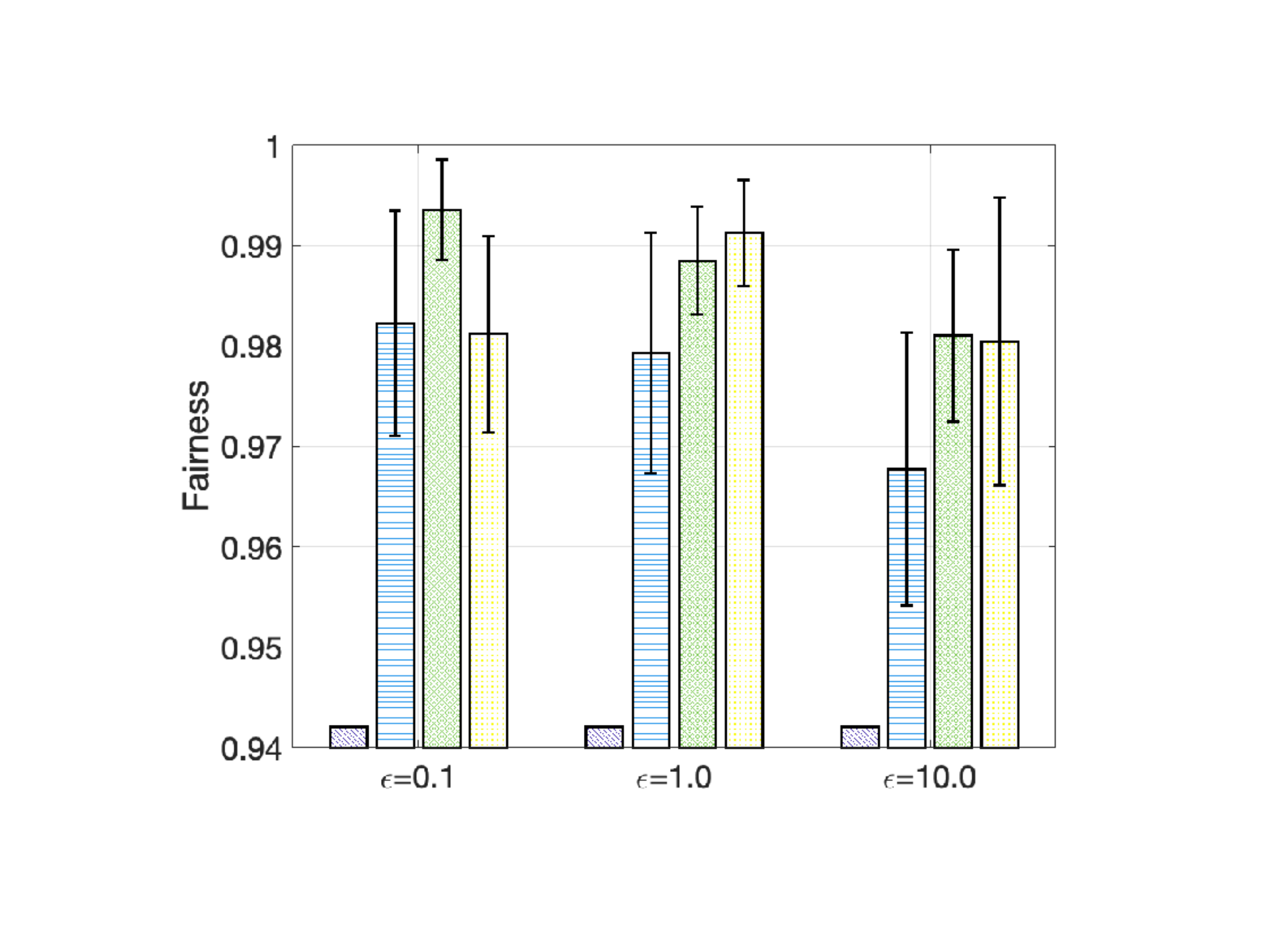}
    \includegraphics[width=0.49\linewidth,trim={2.5cm 2.cm 2.5cm 2.cm},clip]{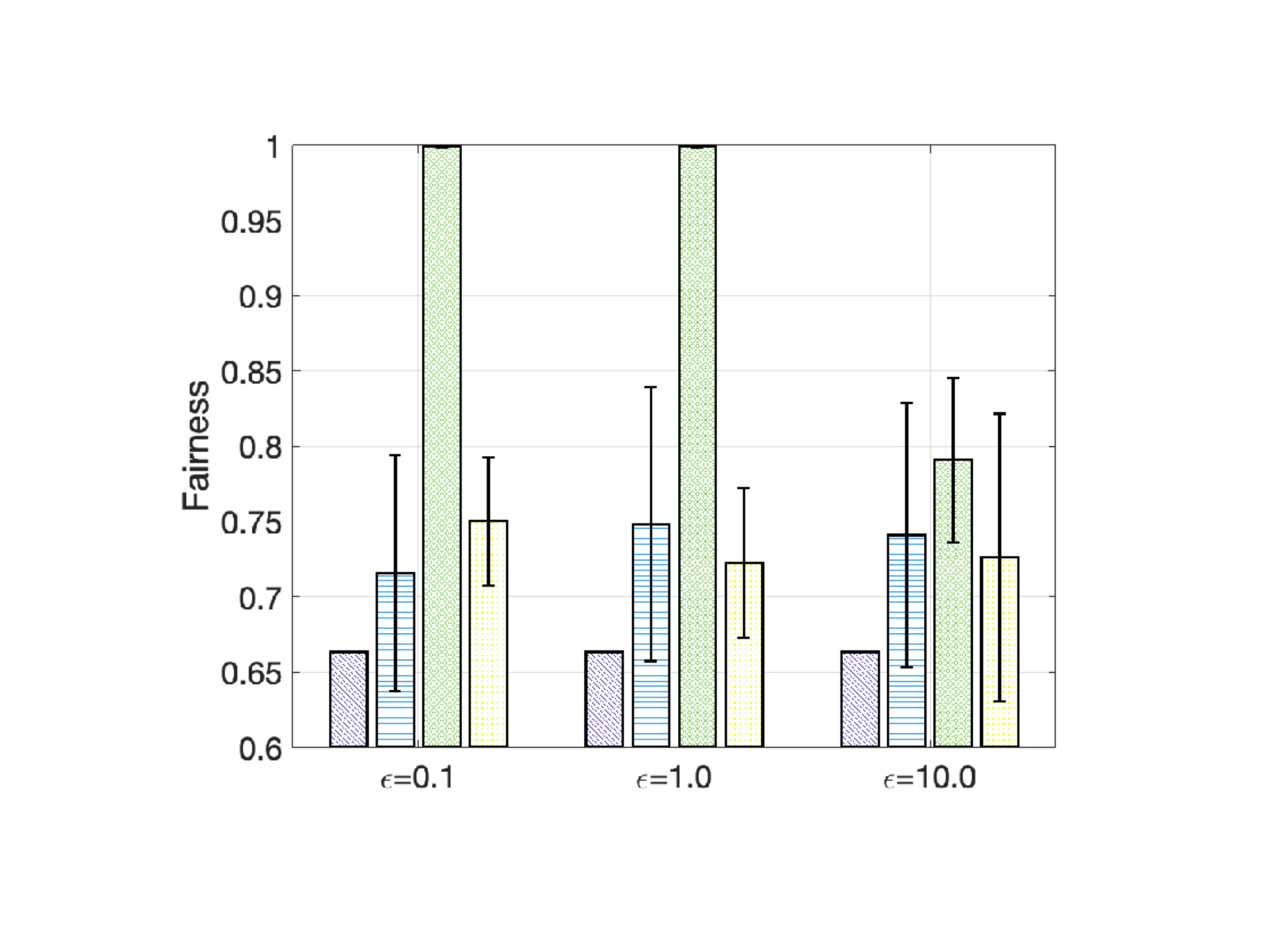}
   \caption{\nw{Fairness comparison between four scenarios and three privacy levels with threshold classifier ($t=0.8$) (left) and logistic linear regression classifier (right) on with NCVR-mod dataset with gender as sensitive feature}}
    \label{fig:withmod_fairness}
    \end{minipage}\hfill
    \begin{minipage}{0.49\linewidth}
    \includegraphics[width=0.49\linewidth,trim={2.5cm 2.cm 2.5cm 2.cm},clip]{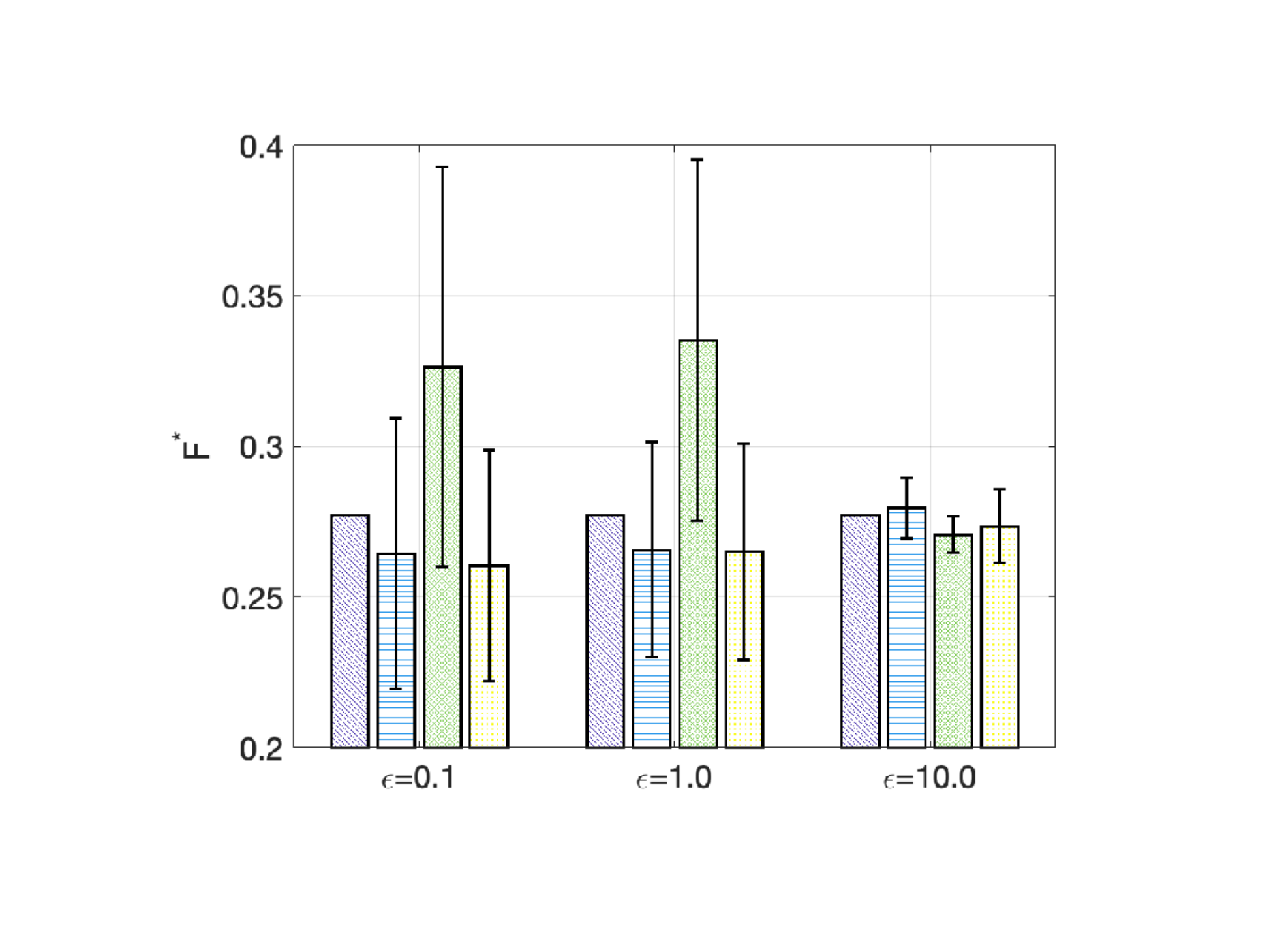}
    \includegraphics[width=0.49\linewidth,trim={2.5cm 2.cm 2.5cm 2.cm},clip]{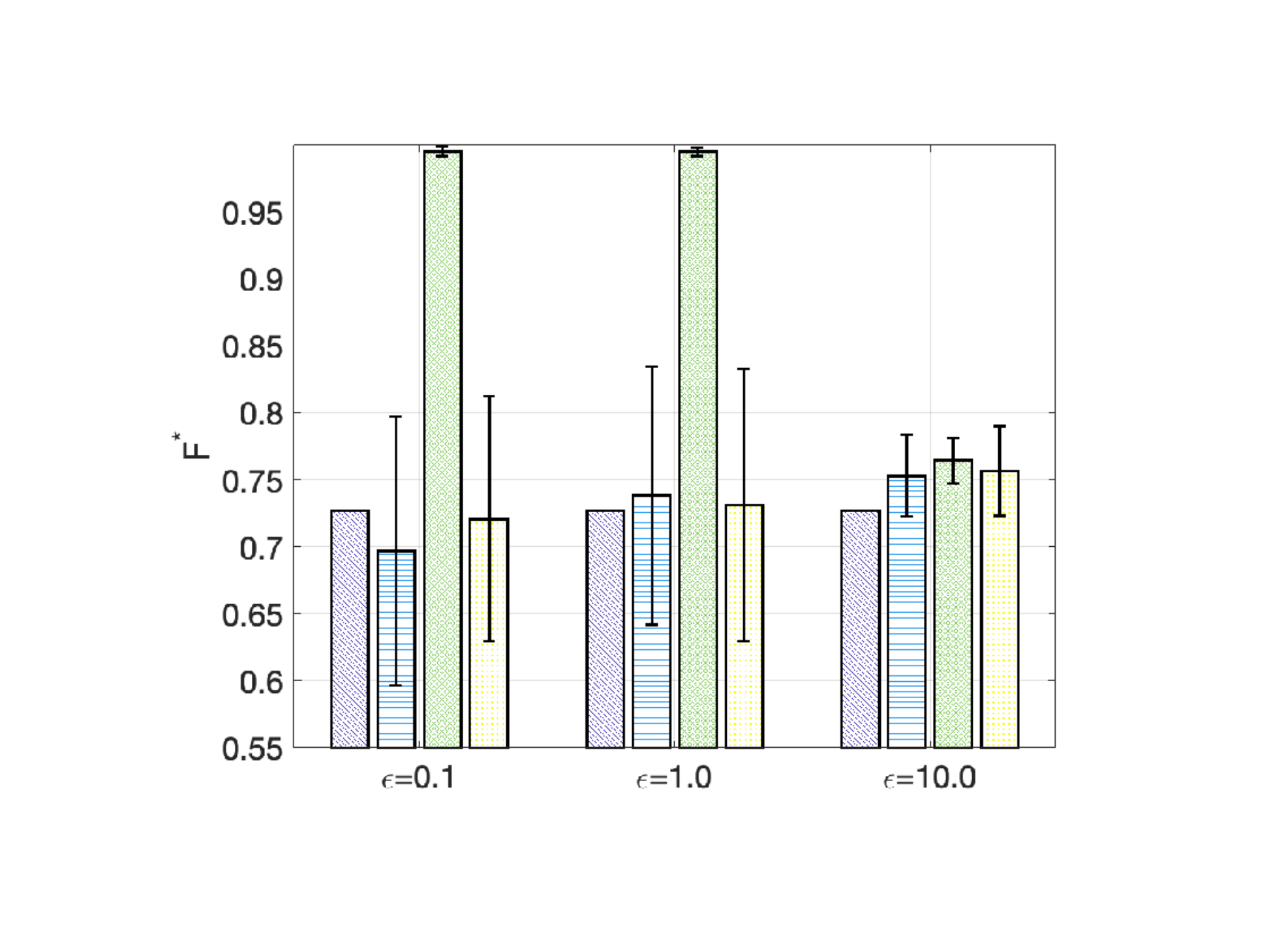}
   \caption{\nw{$F^*$-measure comparison between four scenarios and three privacy levels with threshold classifier ($t=0.8$) (left) and logistic linear regression classifier (right) on with NCVR-mod dataset with gender as sensitive feature}}
    \label{fig:withmod_f1}
    \end{minipage}
\end{figure*}

\dv{Fairness and matching performance metrics are in the range of $[0,1]$ and the higher the values for all metrics except the $fairness\_loss$ metric (for which the lower the value), the better the performance of the linkage in terms of accurate and fair linkage. Lower values for number of record comparisons and privacy budget indicate better computational efficiency and privacy guarantees.}

In this work, we used threshold classifier and Logistic regression machine learning model to perform the linkage (i.e. to classify matches and non-matches). 
We used logistic regression classifier available in sklearn library in Python for classifying record pairs.

First, we demonstrate and validate the false positive probability in Equation.~(\ref{eq:fp}) and false positive rate predictions in Equation.~(\ref{eq:fpr2}). 
Consider the case when there are 5000 entities in each dataset with gender as sensitive features, feature-level DP blocking method is used in the record linkage process with equal privacy budgets and equal flipping probability between different genders. The length of the bloom filter used is 300. The number of hash functions used in bloom filters is 30. The number of iterations of bloom filter is 5. The length of sub-strings (q-grams) is 2. The length of label for each blocked bin is 30. The number of iterations of encoding is 2. The flipping probability varies in the range $[0.0,1.0]$. Threshold classifier with threshold $t=0.8$ is used after the linkage unit to classify matches and non-matches.

\begin{figure*}[!t]
    \centering
     \includegraphics[width=0.99\linewidth]{legend_2.png}  \vspace{0.5cm}\vfill
    \RawFloats
    \begin{minipage}{0.49\linewidth}
    \includegraphics[width=0.49\linewidth,trim={2.5cm 2.cm 2.5cm 2.cm},clip]{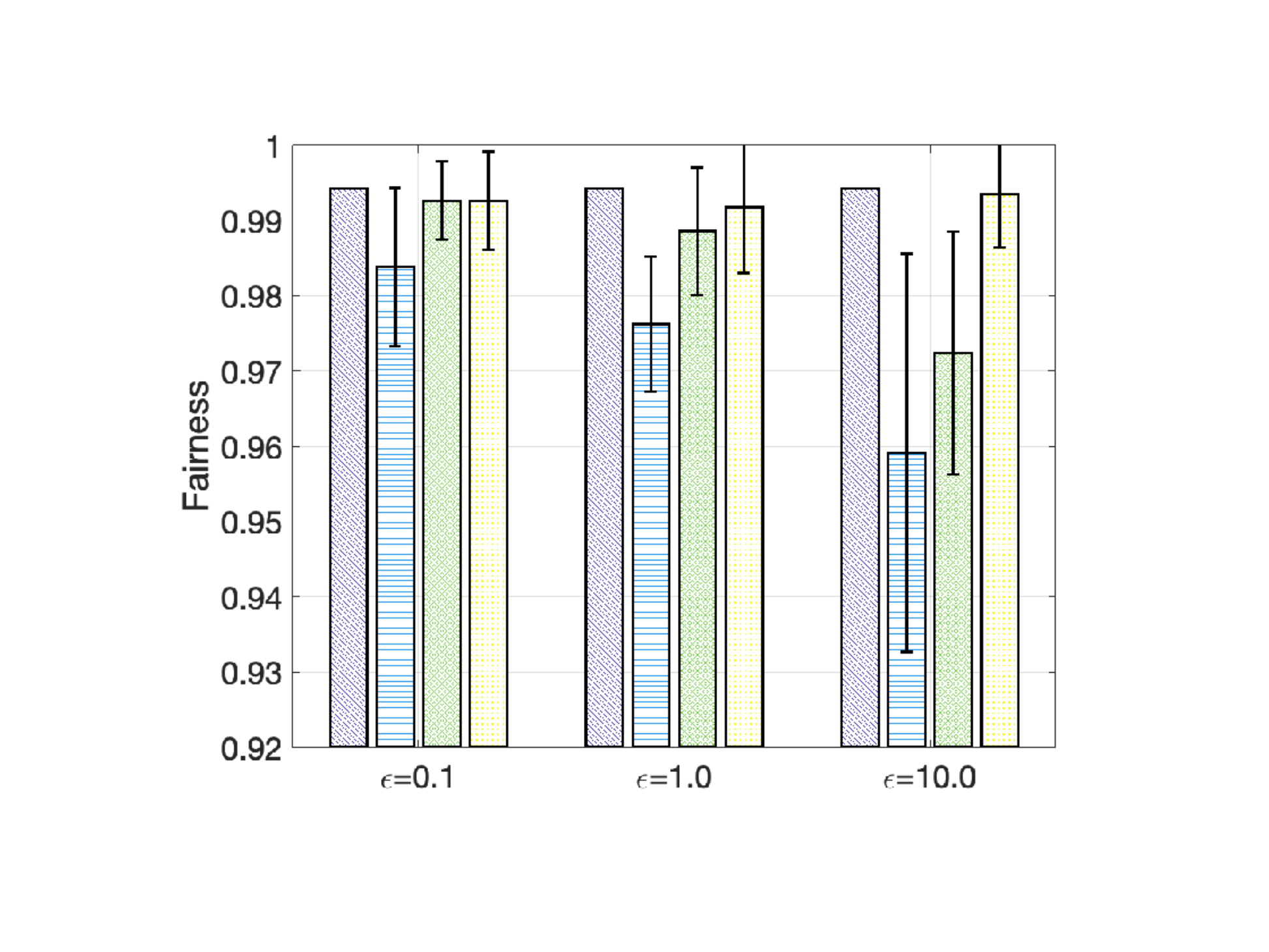}
    \includegraphics[width=0.49\linewidth,trim={2.5cm 2.cm 2.5cm 2.cm},clip]{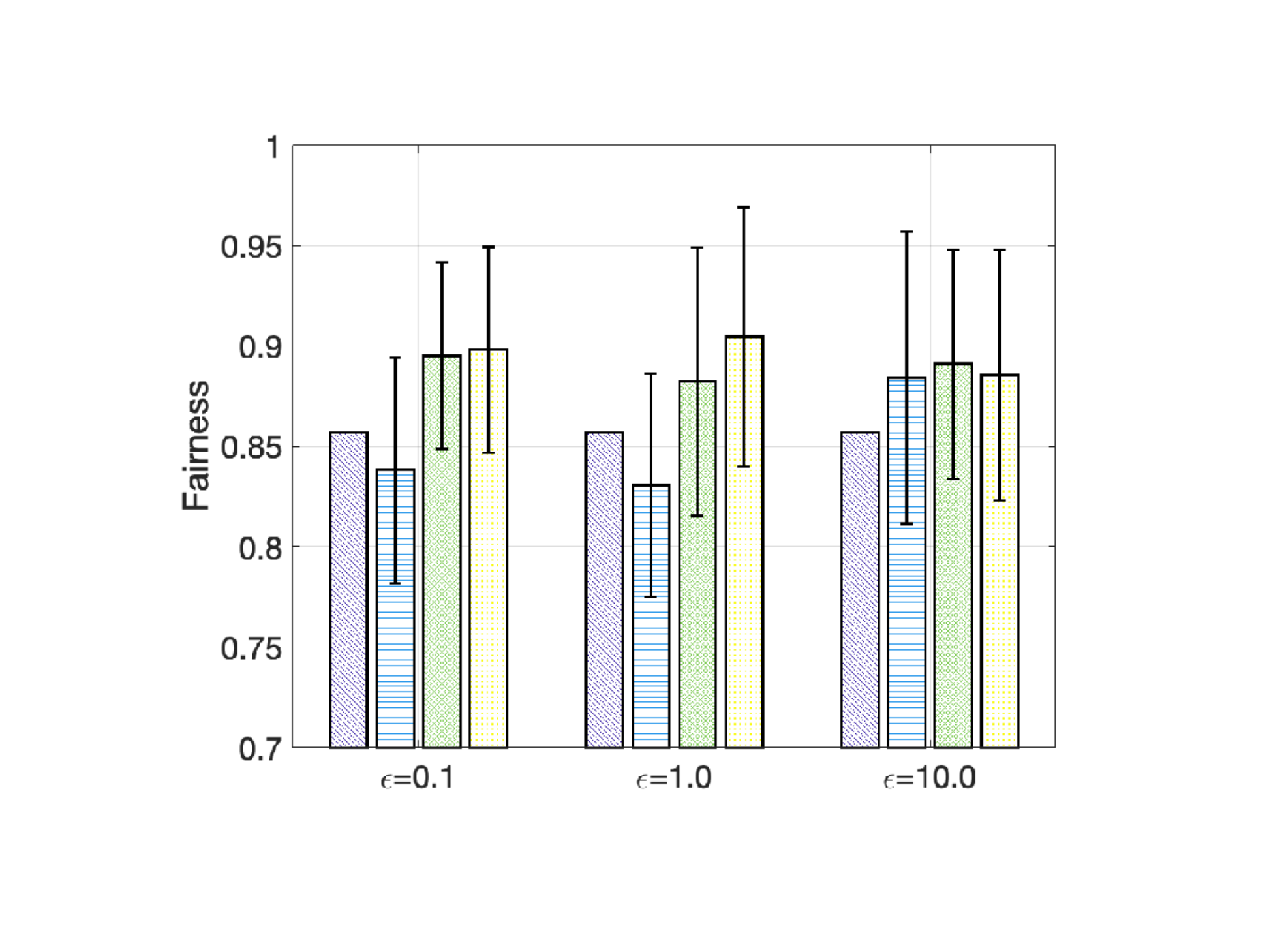}
    \caption{\nw{Fairness comparison between four scenarios and three privacy levels with threshold classifier ($t=0.8$) (left) and logistic linear regression classifier (right) on NCVR-mod dataset with age group as sensitive feature}}
    \label{fig:age_fairness}
    \end{minipage}\hfill
    \begin{minipage}{0.49\linewidth}
    \includegraphics[width=0.49\linewidth,trim={2.5cm 2.cm 2.5cm 2.cm},clip]{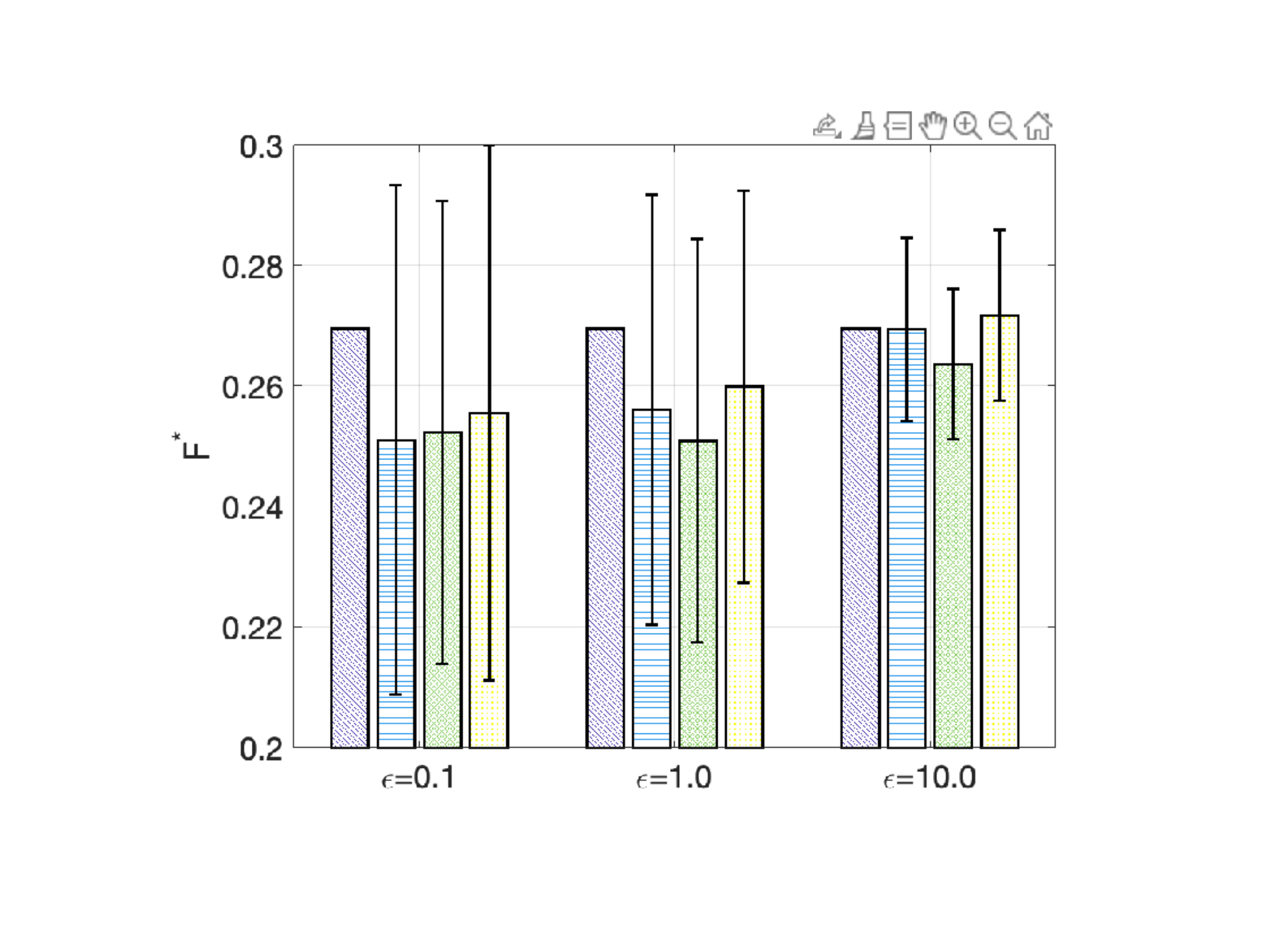}
    \includegraphics[width=0.49\linewidth,trim={2.5cm 2.cm 2.5cm 2.cm},clip]{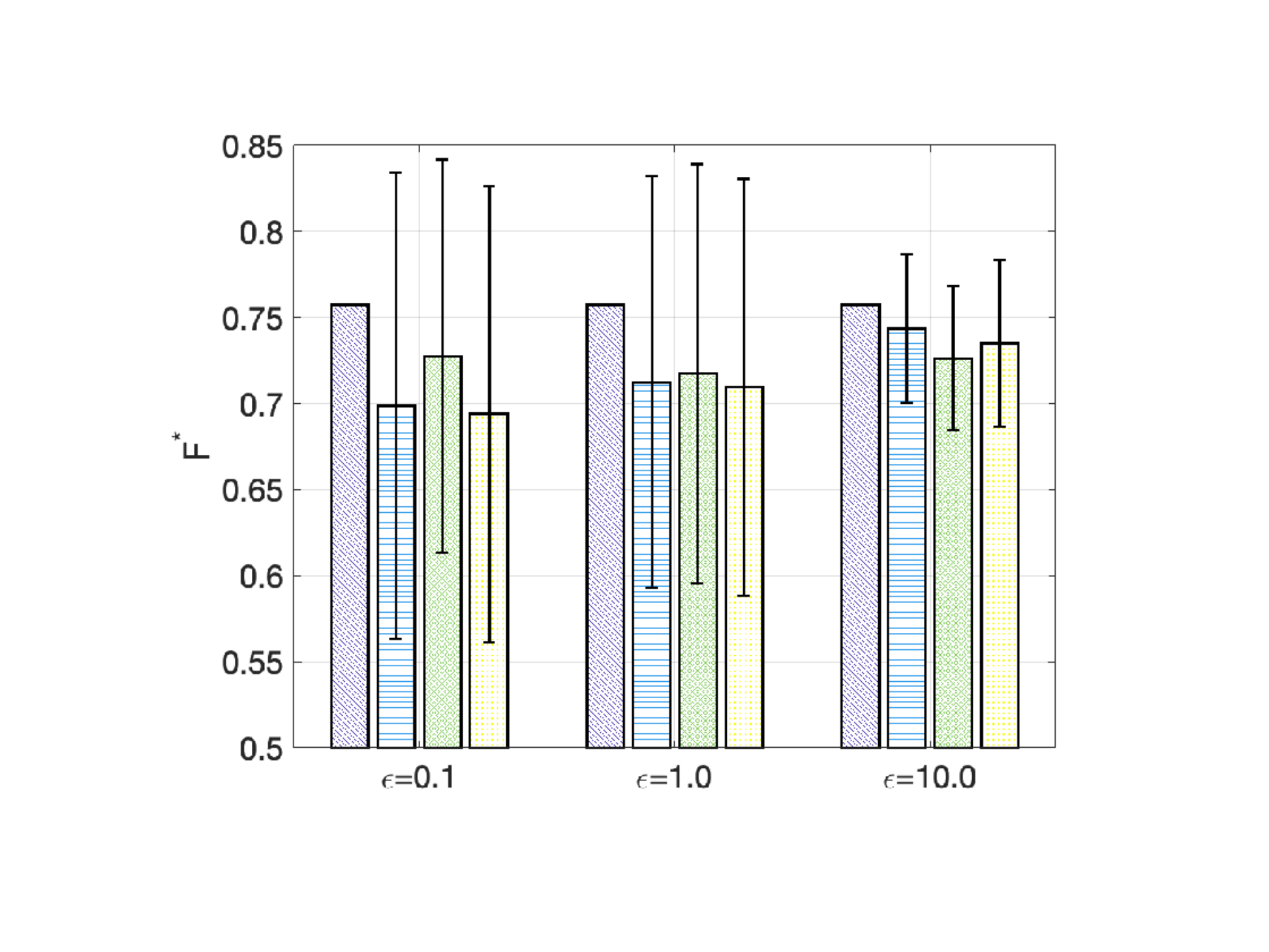}
   \caption{\nw{$F^*$-measure comparison between four scenarios and three privacy levels with threshold classifier ($t=0.8$) (left) and logistic linear regression classifier (right) on NCVR-mod dataset with age group as sensitive feature}}
    \label{fig:age_f1}
    \end{minipage}
\end{figure*}

\begin{figure*}[!t]
    \centering
    \RawFloats
    \begin{minipage}{0.49\linewidth}
    \includegraphics[width=0.49\linewidth,trim={2.5cm 2.cm 2.5cm 2.cm},clip]{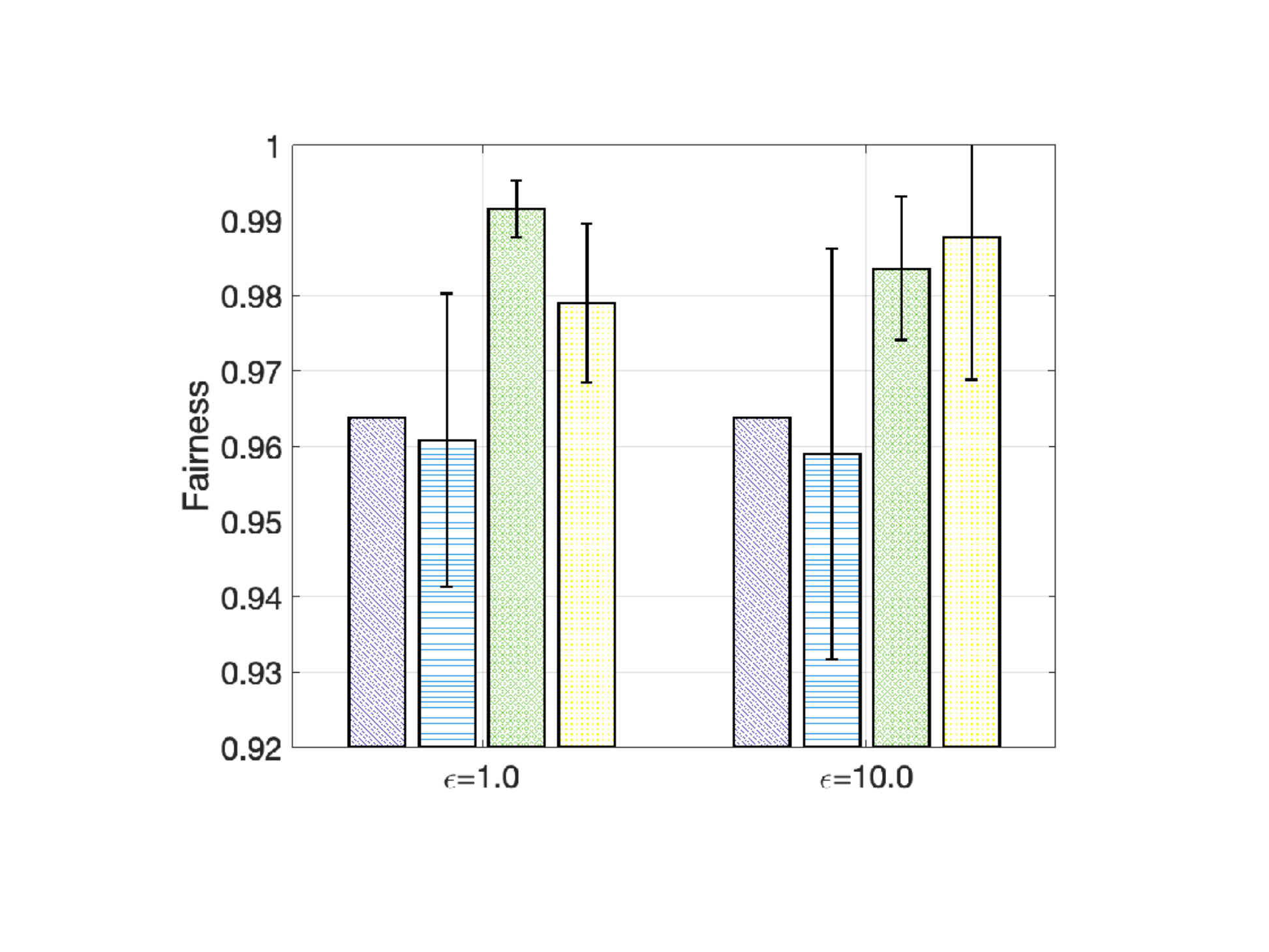}
    \includegraphics[width=0.49\linewidth,trim={2.5cm 2.cm 2.5cm 2.cm},clip]{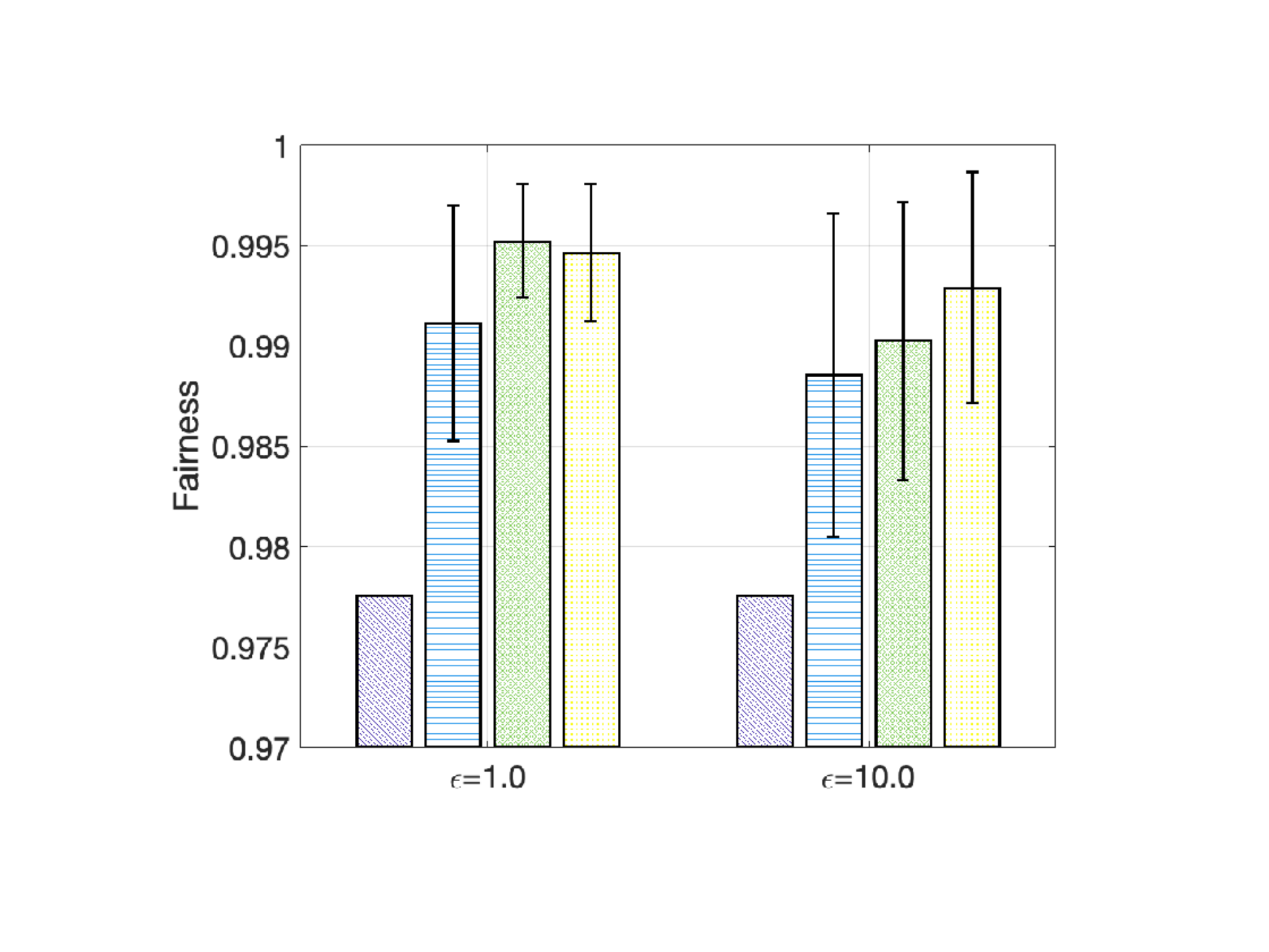}
   \caption{\nw{Fairness comparison between four scenarios and two privacy levels with threshold classifier ($t=0.8$) (left) and logistic linear regression classifier (right) on ABS dataset with gender as sensitive feature}}
    \label{fig:abs_fairness}
    \end{minipage}\hfill
    \begin{minipage}{0.49\linewidth}
    \includegraphics[width=0.49\linewidth,trim={2.5cm 2.cm 2.5cm 2cm},clip]{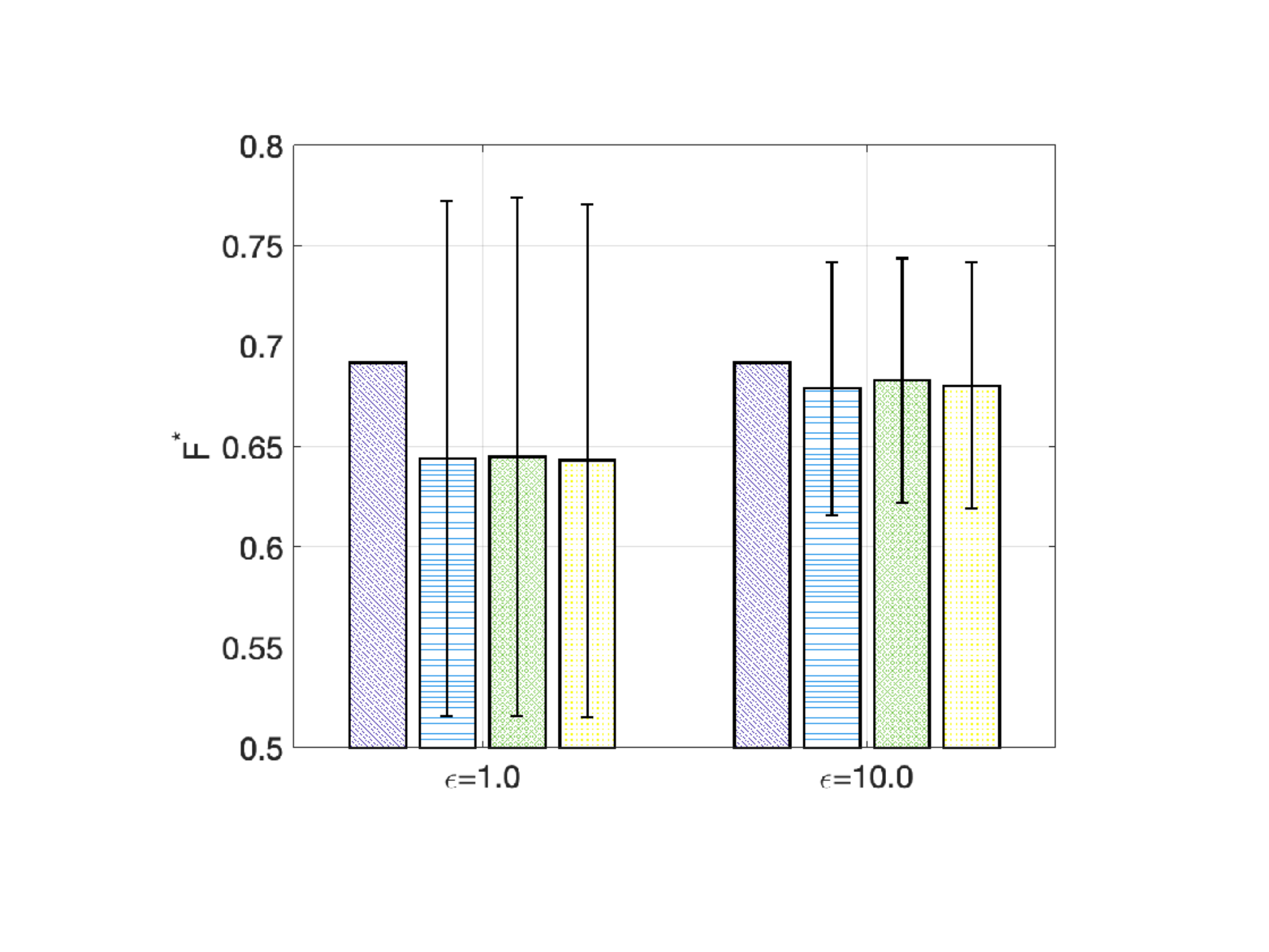}
    \includegraphics[width=0.49\linewidth,trim={2.5cm 2.cm 2.5cm 2cm},clip]{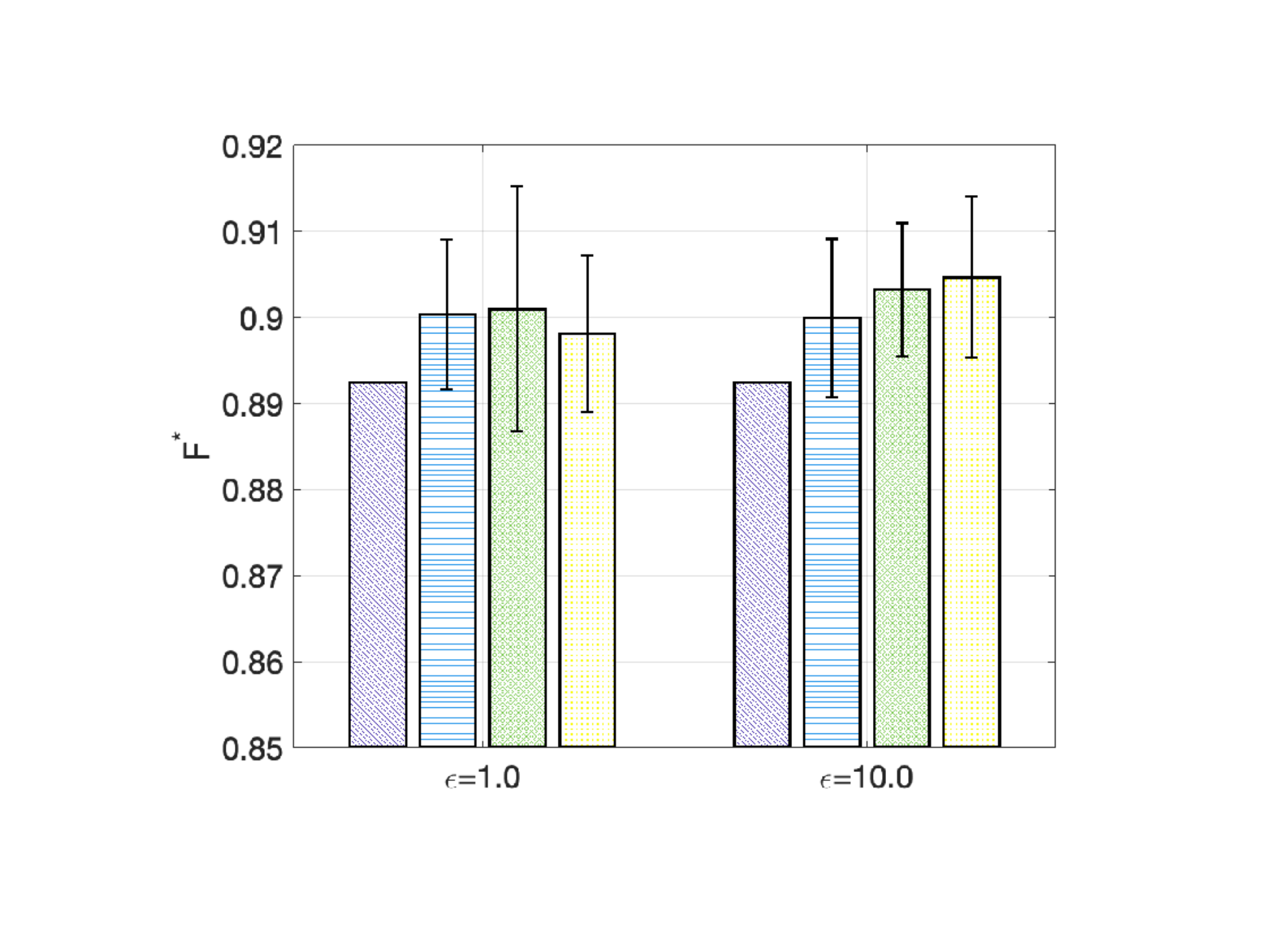}
   \caption{\nw{$F^*$-measure comparison between four scenarios and two privacy levels with threshold classifier ($t=0.8$) (left) and logistic linear regression classifier (right) on ABS dataset with gender as sensitive feature}}
    \label{fig:abs_f1}
    \end{minipage}
\end{figure*}

The false positive probability for one record pair with at least one dummy record with threshold classifier is shown in Fig.~\ref{fig:fp_probability}. In this figure, the theoretical prediction in blue curve matches to the results of real false positive probability in red plot. When flipping probability is equal to zero, which means the dummy record is a copy of its ancestor, the probability of this dummy record to be classified as a false match is $1.0$. With the increase of flipping probability, the false positive probability decreases. It is noted that when the flipping probability is greater than a certain value (e.g. around $0.28$ in Fig.~\ref{fig:fp_probability}), the False Positive probability reduces to zero. The behavior of False positive probability versus flipping probability matches to Theorem.~\ref{theorem:fp}.

Then, we evaluate the effect of privacy budget $\epsilon$ in false positive rate performance with threshold classifier $(t=0.8)$, while flipping probability is fixed for now. Recall the expression of False positive rate is $FPR=\frac{FP}{FP+TN}$. When Flipping probability is fixed, the false positive probability for one dummy record pair is fixed. So the numbers of FPs and TNs depend on the original dataset and the number of additional dummy records. As in Fig.~\ref{fig:fp_probability}, the value of false positive probability converges when flipping probability is great than $0.28$. So, to reduce the effect of flipping probability and its potential bias in record linkage, the values of the flipping probability for all gender groups ($G=2$ groups in our experiments) are ranging from $[0.4,1.0]$.

As shown Fig.~\ref{fig:fp_rate}, FPR is reversely proportional to the number of dummy records. It is remarkable that with the inclusion of of dummy records, FPR reduces as a reason of an increase in the number of TNs from dummy record pairs. When privacy budget $\epsilon$ increases, the FPR increases. Dummy record pairs can only be classified as either FP or TN, and there are always more TNs than FPs. This is because one dummy record is paired with many records other than the true match record from the other dataset. In Fig.~\ref{fig:fp_rate}, the blue curve is the theoretical results in Equation.~\ref{eq:fpr}. The red plot is the average empirical results with feature-level DP blocking method. The theoretical result matches with our empirical results of the feature-level DP method, and both of them validate the relationship between $\epsilon$ and FPR as in Equation.~\ref{eq:fpr}.
 
 

With the knowledge of the effect of flipping probability on false positive probability ($FPR$), we evaluate four scenarios:
 \begin{itemize}
     \item Baseline 1: no noise,
     \item Baseline 2: Feature-level DP blocking method,
     \item Method A: Fairness constrained feature-level DP blocking method,
     \item Method B: Cost constrained fairness-aware feature-level DP blocking method.
 \end{itemize}
In baseline 1, there are no noise added.  In baseline 2, the DP noise that are added for different protected features are the same. In other words, the privacy budget $\epsilon_g$ and flipping probability $flip_g$ for all protected feature $g\in[1,\cdots,G]$ are the same. In method A, privacy budget for each protected feature is kept as a constant value, while $flip_g\in[0.0,1.0]$ for group $g$ satisfies Definition.~\ref{defi:fairness-constrained}. In method B, flipping probabilities for each protected group are the same, while the privacy budget $\epsilon_g$ for group $g$ satisfies Definition.~\ref{defi:cost-constrained}.

With \nw{differential} private blocking method introduced to the record linkage process, there is an improvement in Fairness in terms of Equalised Odds from small privacy budget to large privacy budget as shown in Fig.~\ref{fig:nomod_fairness}, Fig.~\ref{fig:withmod_fairness}, Fig.~\ref{fig:age_fairness} and Fig.~\ref{fig:abs_fairness}. 
\dv{When gender is considered as sensitive feature, Fairness improves significantly for both threshold classifier ($t=0.8$) and logistic linear regression classifier on NCVR-Non-mod dataset, NCVR-mod dataset and ABS dataset}. \nw{In Fig.~\ref{fig:age_fairness}, when age group is used as a sensitive feature, fairness with differential private blocking improves only for logistic linear regression. This is because for threshold classifier, the fairness for four scenarios (including the baselines with no fairness) with small privacy budgets are already high and the differential privacy noise is stochastic with large privacy budget.
The fairness of our methods with respect to age group (in contrast to gender) does not always improve compared to baseline 1 without noise. However, our methods improved the fairness with respect to both age group and gender compared to baseline 2. Our results indicate that our methods with any protected feature (age group or gender) outperform the baselines by achieving both high privacy and high fairness.

}

Both Method A and Method B help reduce the fairness loss compared to Baseline 2 all the time. From Fig.~\ref{fig:nomod_fairness}, Fig.~\ref{fig:withmod_fairness}, \nw{Fig.~\ref{fig:age_fairness}} and Fig.~\ref{fig:abs_fairness}, Method B has better performance than Method A in reducing fairness-bias with respect to gender for large privacy budget, while Method A performs better with small privacy budget. The reason is when privacy budget is large the bias is more sensitive in cost. While privacy budget is small, the flipping probability value dominates the gender fairness-bias. Hence, Method A is more preferred than Method B for use when privacy budget is small.  
Baseline 1 with no noise added to the blocked bins experiences the lowest fairness performance all the time. 
\nw{Method B has better performance than Method A in reducing fairness-bias introduced by added differential private noise with age group as sensitive group. As shown in Fig.~\ref{fig:age_fairness}, fairness in Baseline 1 is relatively high with threshold classifier. The introduced DP blocking method increases fairness loss as shown in Baseline 2. Both Method A and Method B help in reducing the fairness loss in this case. It is remarkable that Method B has equivalent or better performance among four scenarios for both threshold classifier and logistic linear regression classifier.  }

\begin{figure*}[t]
    \centering
    \includegraphics[width=0.329\textwidth,trim={7cm 2.5cm 7cm 2cm},clip]{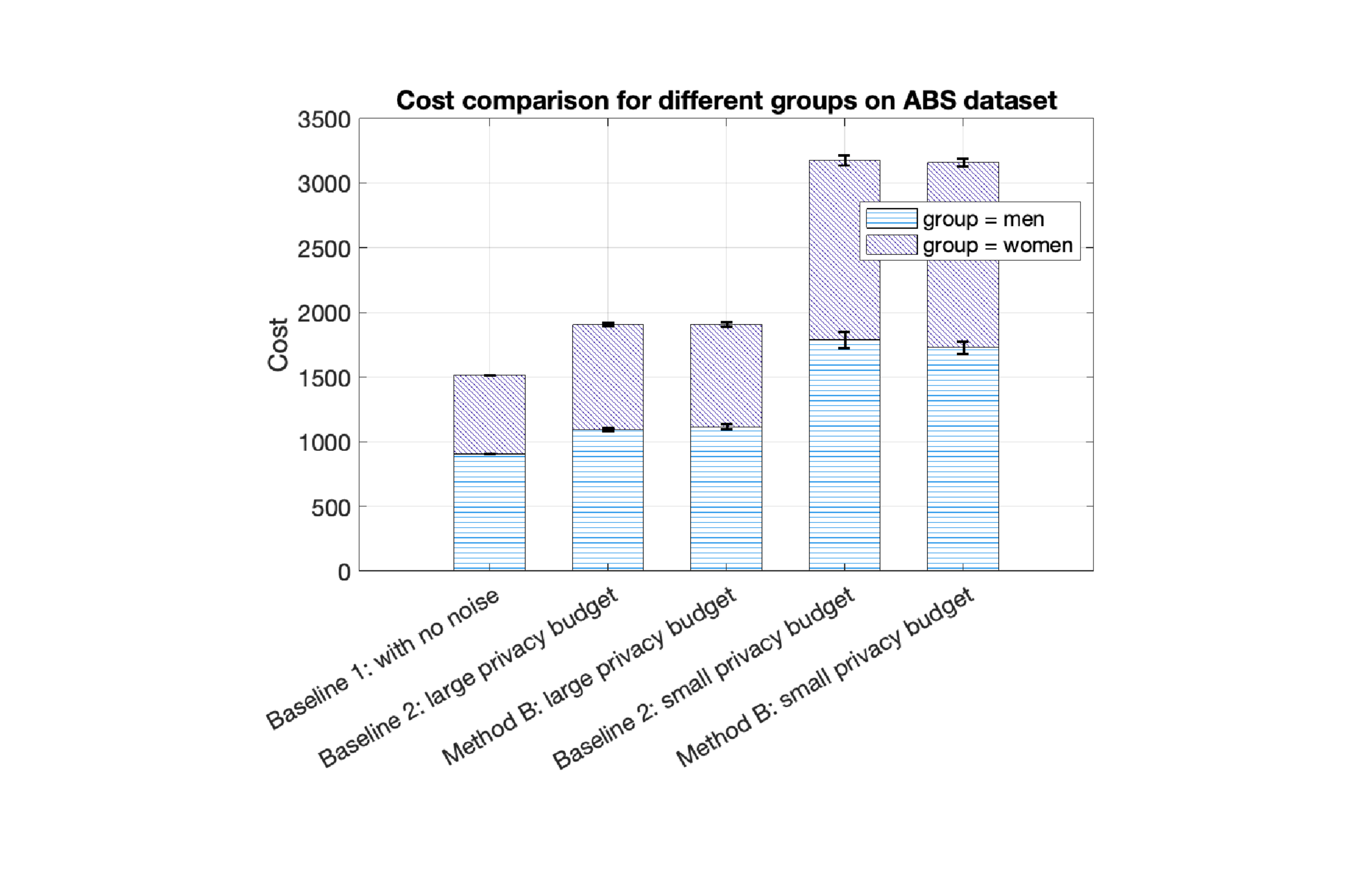}
    \includegraphics[width=0.329\textwidth,trim={7cm 2.5cm 7cm 2.cm},clip]{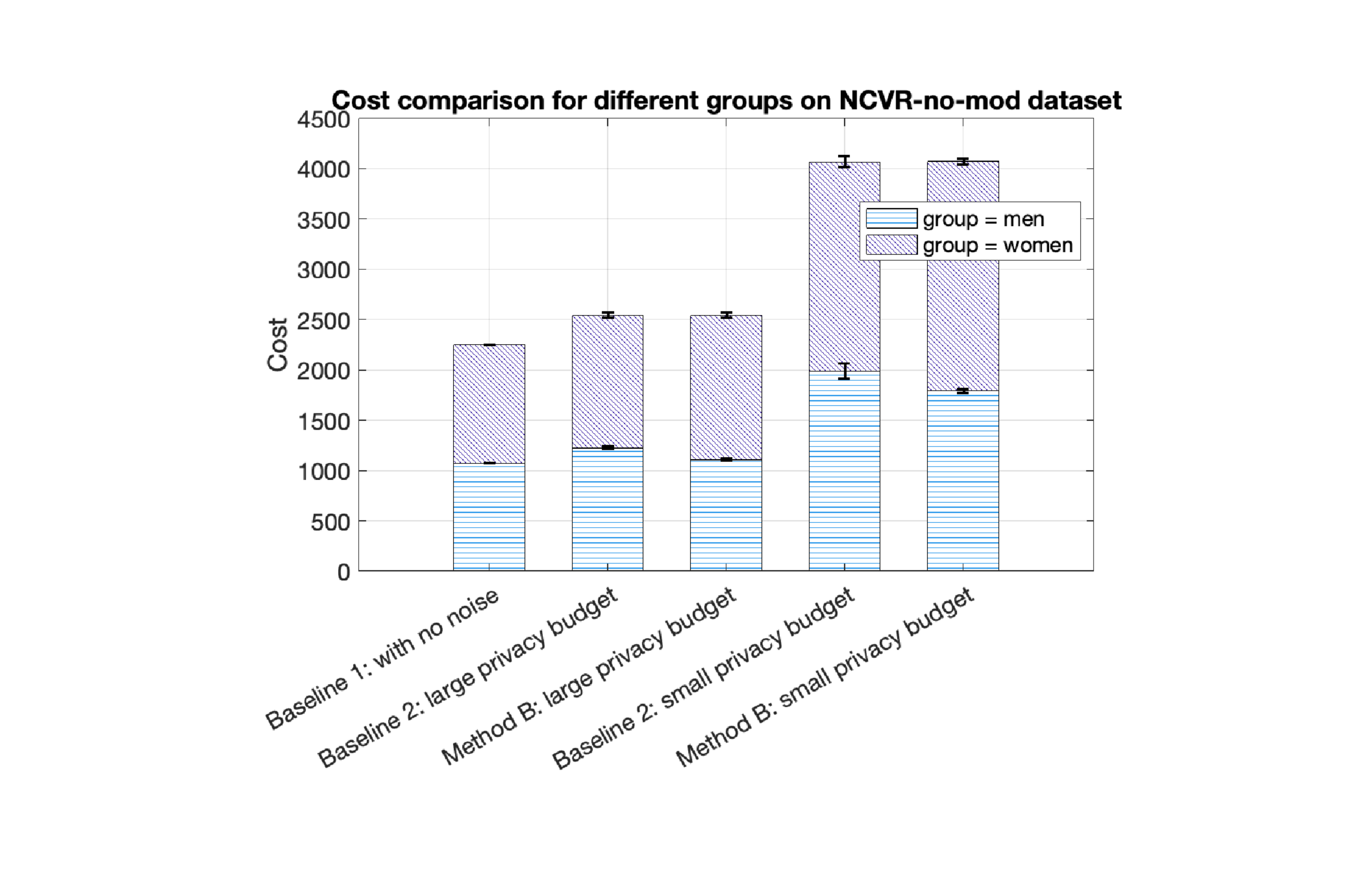}
    \includegraphics[width=0.329\textwidth,trim={7cm 2.5cm 7cm 2cm},clip]{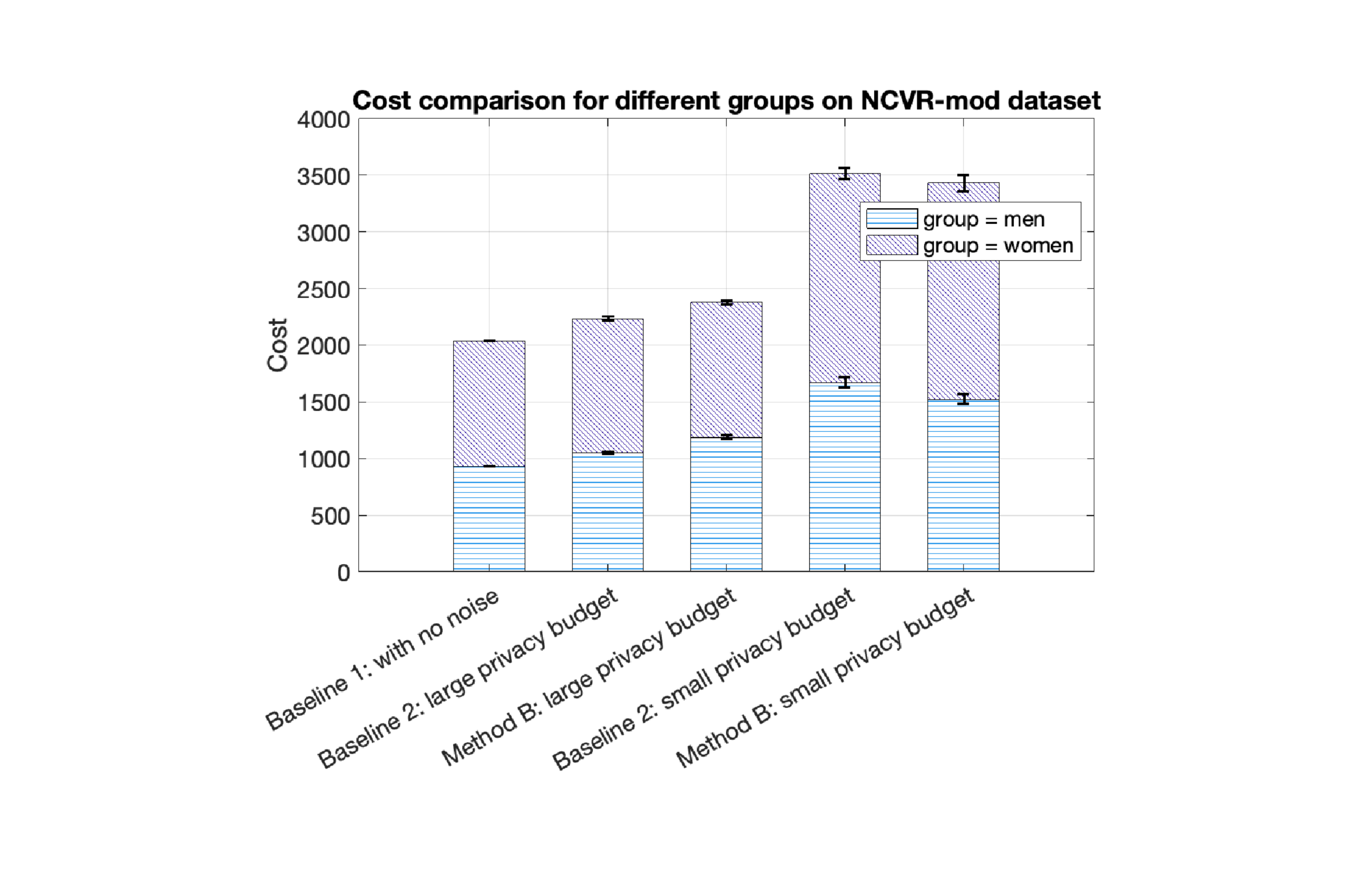}
    \caption{\nw{Cost comparison between Baseline 1: no noise, Baseline 2: feature-level DP blocking method and Method B Cost-constrained feature level DP blocking method for small and large privacy budgets on ABS dataset (left), NCVR-Non-mod dataset (middle) and NCVR-mod dataset (right)}}
    \label{fig:abs_cost}
\end{figure*}

Intuitively, adding privacy preserving noise to the bins of records seems to reduce the precision of the linkage performance. And this can be shown in Fig.~\ref{fig:nomod_f1}, Fig.~\ref{fig:age_f1} and Fig.~\ref{fig:abs_f1}. In Fig.~\ref{fig:nomod_f1}, when privacy budget is small $\epsilon=0.1$ and $\epsilon=1.0$ for threshold classifier, \nw{$F^*$-measure} for Baseline 2, Method A and Method B are worse than Baseline 1 no noise scenario.
\nw{When privacy budget $\epsilon$ is small, adding privacy preserving noise might reduce the $F^*$-measure as shown in Fig.~\ref{fig:nomod_f1} and in Fig.~\ref{fig:age_f1}.} 
\nw{While in Fig.~\ref{fig:abs_f1}, with $\epsilon=1.0$ and $\epsilon=10.0$ for threshold classifier, $F^*$-measure for Baseline 1 performs best in all four scenarios.} However, adding privacy preserving noise doesn't always decrease the \nw{$F^*$-measure} performance. Our results show that the \nw{$F^*$-measure} of scenarios with feature-level DP blocking methods are better than Baseline 1 with no noise \nw{in Fig.~\ref{fig:nomod_f1} on NCVR-Non-mod dataset with logistic linear regression classifier, and in Fig.~\ref{fig:withmod_f1} on NCVR-mod dataset with threshold classifier and logistic linear regression classifier, \nw{and in Fig.~\ref{fig:age_f1} with threshold classifier with large privacy budget,} and also in Fig.~\ref{fig:abs_f1} on ABS dataset with logistic linear regression classifier.  
}

It is shown that with Fairness-constrained feature-level DP blocking method and small privacy budget $\epsilon=0.1$ and $\epsilon=1.0$ with threshold classifier, both fairness and \nw{$F^*$-measure} are significantly improved compared to Baseline 1 and Baseline 2 in some \nw{high fairness-bias} cases. For logistic regression classifier on \nw{NCVR-no mod dataset}, NCVR-mod dataset and ABS dataset, Baseline 1 experiences lowest fairness and worst \nw{$F^*$-measure} performance among all scenarios \nw{in high fairness-bias cases}. \nw{Our methods perform better in terms of F* measure than both baselines with highly biased gender group. }

Method A and Method B improve Baseline 1 and Baseline 2 on fairness while requiring the same overall privacy budget.
For Method A, the privacy budgets between different protected features are the same, so the privacy budgets for all protected features remain the same as Baseline 2. By applying the ($flip_g$, Fairness)- Constrained Differential Privacy, the distance between False positive rates for male and female is reduced by adjusting the value of flipping probabilities, while the pairing cost for each protected feature remains the same as Baseline 2.  
For Method B, the overall privacy budget remains the same, while the privacy budgets for different protected features are different from each other. 
In method B, ($\epsilon$,Cost)-Constrained fairness-aware Differential Privacy is used. As shown in Fig.~\ref{fig:abs_cost}, the overall privacy budgets remain the same as Baseline 2, while the cost for protected group men/male reduces and the cost for protected group women/female increases.

\section{Conclusion}
\label{sec:conclusion}

Differential private grouping or blocking has been used in several works in the Privacy-Preserving Record Linkage (PPRL) literature to efficiently link (encoded) records from different parties while providing resilience against frequency inference attacks on the bins/blocks of encoded records. However, these methods use the standard differential privacy (DP) notion that does not consider other constraints, such as fairness of linkage and computational cost of comparing record pairs for linkage, when adding differential privacy noise to the bins. 

In this work, we propose new DP notions that are constrained not only on privacy guarantees, but also on fairness-bias in data and computational cost of linkage and apply our new DP notions to PPRL framework based on Bloom filter encoding and DP. We theoretically validate our new notions. Our experimental results show that the new PPRL algorithm following the two new DP notions constrained on fairness and cost provide better results in terms of fairness and cost for the same privacy guarantees. 

While our initial results are promising, the cost-constrained fairness-aware DP method does not perform well compared to the fairness-constrained DP method in terms fairness results. We would like to analyze further the cost-constraint method and the trade-off between cost and fairness.
In the future, we would like to explore fairness and cost-constrained DP for other learning tasks, including privacy-preserving active learning and privacy-preserving clustering. \dv{Another line of future work is experimenting the proposed notions with multiple protected features (e.g. gender and race) and different machine learning linkage models.}

\section*{Acknowledgment}
This research was funded by Macquarie University CyberSecurity Hub and strategic research funds from Macquarie University. Author Dinusha Vatsalan was affiliated with CSIRO Data61 at initial stages of the writing of this manuscript.


\ifCLASSOPTIONcaptionsoff
  \newpage
\fi



%




 \bibliographystyle{plain} 
\bibliography{IEEEabrv,paper}

\begin{IEEEbiography}[{\includegraphics[width=1in,height=1.25in,clip,keepaspectratio]{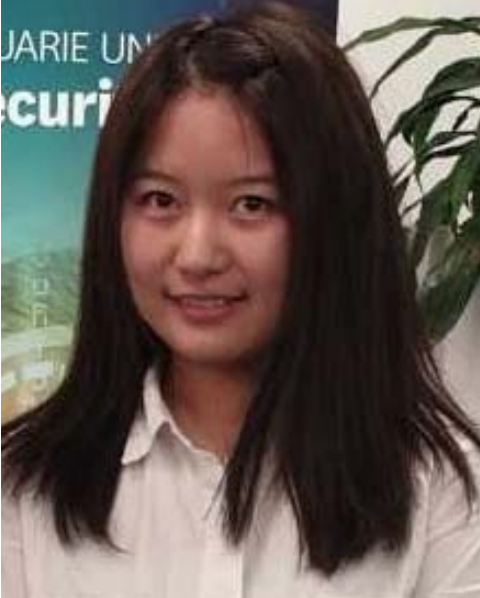}}]{Nan Wu} is currently pursuing the Ph.D. degree with Macquarie University and CSIRO's Data61, Australia. Nan received the B.S. degrees (Hons.) in electronic and communication systems from the Australian National University, Australia, in 2015, and from the Beijing Institute of Technology, China, in 2016, and the M.Res. degree in computer science from Macquarie University, Australia, in 2019. Her research interests include privacy-preserving machine learning, game theory, security and privacy, data sharing, data mining, and record linkage. 
\end{IEEEbiography}
\begin{IEEEbiography}[{\includegraphics[width=1in,height=1.25in,clip,keepaspectratio]{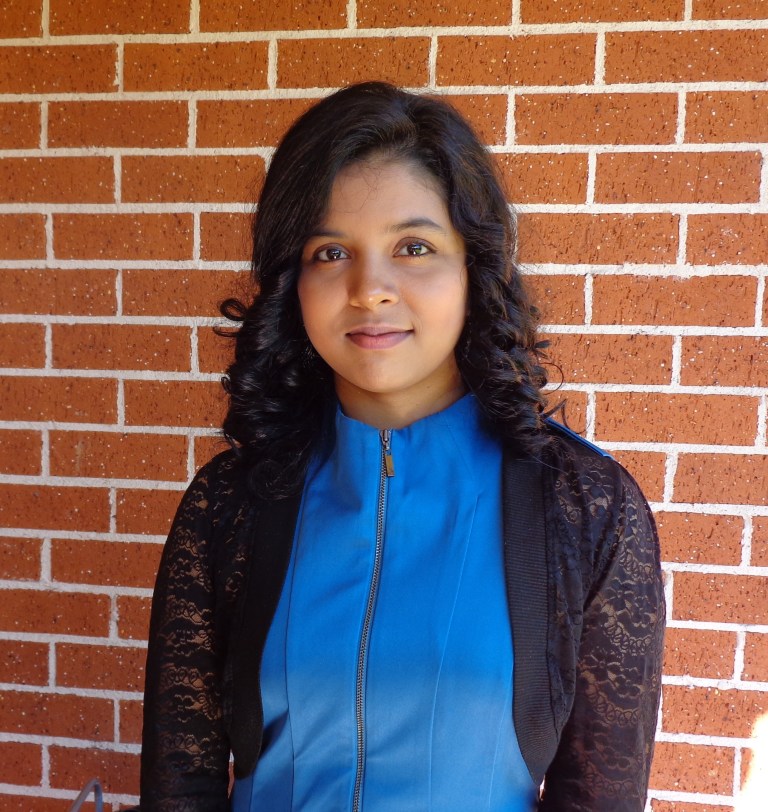}}]
{Dinusha Vatsalan} 
is a Senior Lecturer in Cyber Security at Macquarie University. Dinusha received her PhD in Computer Science from Australian National
University and BSc (Hons) from University of Colombo, Sri Lanka. She was
a Research Scientist at Data61, CSIRO.
Her research interests include privacy-preserving technologies for record linkage, data mining, machine learning, data sharing, and data
analytics, privacy attacks and defences, and privacy risk quantification. Dinusha has authored over 60 scientific articles in these research topics.
\end{IEEEbiography}


\begin{IEEEbiography}[{\includegraphics[width=1in,height=1.25in,clip,keepaspectratio]{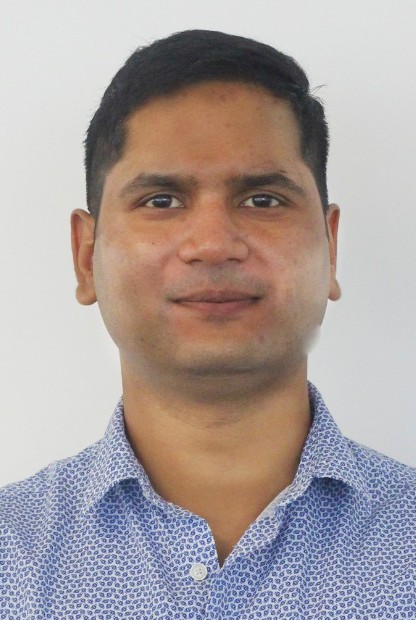}}]{Sunny Verma}
received his Ph.D. degree in Computer Science from the University of Technology Sydney in 2020. He is currently working as Postdoctoral Research Fellow on identifying cyberattacks on Human-bot teams within  AUSMURI at Macquarie University. He previously worked at the Data Science Institute, University of Technology Sydney, and Data61, CSIRO. His research interests include data mining, FATE, and interpretable deep learning.
\end{IEEEbiography}

\begin{IEEEbiography}[{\includegraphics[width=1in,height=1.25in,clip,keepaspectratio]{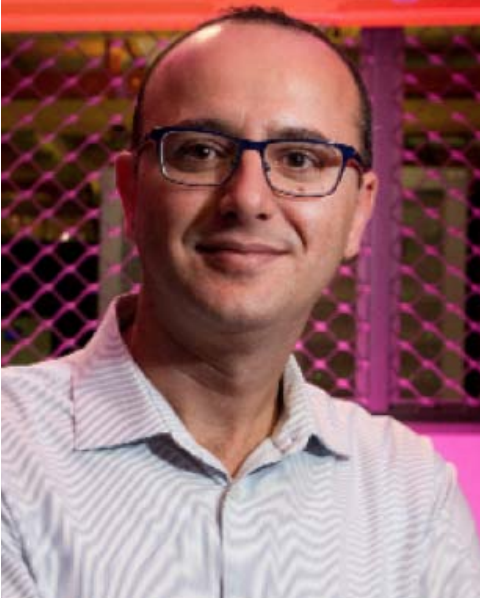}}]{Mohamed Ali (Dali) Kaafar}is a Professor at the Faculty of Science and Engineering at Macquarie University and the Executive Director of the Optus-Macquarie University Cyber Security Hub. He is also the founder of the Information Security and Privacy group at CSIRO Data61. Prior to that, Dali was the research leader of the Data Privacy and Mobile systems groups at NICTA and Senior principal researcher at INRIA, the French research institution of computer science and automation. He received his PhD from University of Nice Sophia Antipolis and INRIA in France where he pioneered research in the security of Internet Coordinate Systems. 

Prof. Kaafar is an associate editor of IEEE Transactions on Information Forensics \& Security and serves in the Editorial Board of the Journal on Privacy Enhancing Technologies. He published over 300 scientific peer-reviewed papers with several and repetitive publications in the prestigious IEEE Symposium on Security and Privacy (IEEE S\&P), ACM SIGCOMM, WWW, NDSS and PETS. 
He received several awards including the INRIA Excellence of research National Award, and the Andreas Pfitzman award from the Privacy Enhancing Technologies symposium in 2011. In 2019, he has also been awarded the prestigious and selective Chinese Academy of Sciences President's Professorial fellowship Award. 
\end{IEEEbiography}


\vfill



\end{document}